\documentclass[11pt]{article}
\usepackage[margin=3cm]{geometry}
 \usepackage{amsmath,amssymb,amsthm}
\usepackage[colorlinks=true,breaklinks=true,bookmarks=true,urlcolor=blue,
     citecolor=blue,linkcolor=blue,bookmarksopen=false,draft=false]{hyperref}

   \usepackage{color,framed}
\usepackage[font=small]{caption}
\usepackage{paralist}
\captionsetup{width=0.8\textwidth}
\usepackage{float}
  \usepackage{etex}
 \usepackage{tikz}
  \usepackage{breakurl}

\usepackage{bm}

\tikzstyle{normalNodeS}=[circle, color=black!75!white, fill, draw, inner sep = 0.1em, minimum size = 1.5 em, scale=0.5]
\tikzstyle{labeledNodeS}=[circle, color=black!75!white, draw, inner sep = 0.1em, minimum size = 1.5em, scale=.75]
\tikzstyle{normalEdgeF}=[line width=1.6pt, color=black!75!white, >=stealth]
\tikzstyle{holdoverEdge}=[normalEdgeF, deeporange, thick]

\tikzstyle{demandNode}=[inner sep=0em]
\tikzstyle{demandEdge}=[>=stealth, double, thick, transparent]

\newcommand{\R}{\mathbb{R}} 

\renewcommand{\S}{\mathcal{S}}
\newcommand{\B}{\mathcal{B}}

\newcommand{\I}{{\mathcal{I}}}
\newcommand{\M}{{\mathcal{M}}}

\newcommand{\Delete}[1]{}

 \newtheorem{theorem}{Theorem}[section]
\newtheorem{proposition}[theorem]{Proposition}
\newtheorem{corollary}[theorem]{Corollary}

\newtheorem{lemma}[theorem]{Lemma}

\newtheorem{remark}[theorem]{Remark}
\newtheorem{definition}[theorem]{Definition}
\newtheorem{assumption}[theorem]{Assumption}
\newtheorem{example}[theorem]{Example}

\author{%
Satoru Fujishige\thanks{Research Institute for Mathematical Sciences, 
Kyoto University, Kyoto 606-8502, Japan.  E-mail: \texttt{fujishig@kurims.kyoto-u.ac.jp}.
 The present work is supported 
by JSPS Grant-in-Aid for Scientific Research (B) 25280004.} \and
Michel X. Goemans\thanks{Department of Mathematics, MIT, Cambridge, MA 02139, USA.  E-mail: \texttt{goemans@math.mit.edu}.} \and
  Tobias Harks\thanks{ Institute of Mathematics, University of Augsburg, 
  86135 Augsburg, Germany. Email: \texttt{tobias.harks@math.uni-augsburg.de}.
  } \and Britta Peis\thanks{School of Business and Economics, RWTH Aachen University, 52072 Aachen, Germany. Email: \texttt{peis@oms.rwth-aachen.de}.}
 \and Rico Zenklusen\thanks{Department of Mathematics, ETH Zurich, Zurich, Switzerland, and Department
of Applied Mathematics and Statistics, Johns Hopkins University, Baltimore, USA. Email: \texttt{ricoz@math.ethz.ch}.}
}

\title{Matroids are Immune to Braess Paradox}

\begin{document}

\maketitle
\begin{abstract}
The famous Braess paradox describes the counter-intuitive phenomenon in which, in certain settings, 
the increase  of resources, like building a new road within a congested network,
may in fact lead to larger costs for the players in an equilibrium.
In this paper, we consider general nonatomic congestion games and give a characterization of the combinatorial property of strategy spaces 
for which the Braess paradox does not occur.
In short, \emph{matroid bases} are precisely the required structure.
We prove this characterization by two novel sensitivity results
for convex separable optimization problems over polymatroid base polyhedra which may
be of independent interest.

\end{abstract}
\section{Introduction.}

In a congestion game (as introduced by Rosenthal~\cite{Rosenthal73a,Rosenthal73b}) there is a finite set of players that compete over a finite set of resources. A pure strategy of a player consists of a subset of resources, and the congestion cost of a resource depends only on the number of players choosing the same resource. 

\emph{Nonatomic} congestion games model the interaction of a large
number of players with the property
that the strategy choice of each player has only a negligible effect on the others.
In these kinds of model, it is usually assumed that there is a continuum of players 
partitioned into \emph{populations} and the strategy space available to a player of a population comprises a  population-specific set of allowable subsets of resources. 
A \emph{pure Nash equilibrium} of a nonatomic congestion game
is a strategy distribution from which
no player can unilaterally select a different subset of resources with strictly lower cost.
Here, the cost of a subset is simply defined as the sum of the resource costs.
Nonatomic congestion games have a wide range of applications, for example, they are used to model habitat selection in biology (cf.~Milchtaich~\cite{Milchtaich96b}),
queueing systems (cf.~Korilis et al.~\cite{Korilis99}) and packet routing in telecommunications (cf.~Qiu et al.~\cite{Qiu06}).
Perhaps the most famous example of a nonatomic congestion game appears
in the traffic model of Wardrop where the resources form a (directed) graph
and a population corresponds to a continuum of players that want
to travel from an origin to some destination in the graph. In this case,
the set of allowable subsets corresponds to the set of origin-destination
paths and the costs represent travel times. In a \emph{Wardrop} equilibrium (cf.~Wardrop~\cite{Wardrop52}) each player selects a path of minimum cost. 
The existence of Wardrop equilibria and their characterization via pure Nash equilibria of
an associated non-cooperative game (assuming continuity of cost functions) has been established since the early 50's, see Beckmann et al.~\cite{Beckmann56}. In~\cite{Beckmann56}, the authors
show that a strategy distribution is a Wardrop equilibrium if and only if it is a global minimum of an associated separable convex function known as the Beckmann potential.
This characterization further implies that for continuous and nondecreasing cost functions,
any Wardrop equilibrium has the same cost on every resource (cf.~{\rm \cite{CorreaS2011} and Remark~\ref{l.potential}}).

\subsection{Braess Paradox.}
In this paper we will study a well-known phenomenon 
originally discovered in the context of the Wardrop routing model:
Dietrich Braess, a German mathematician, published in 1968 a paper~\cite{Braess68} (see also
the paper~\cite{BraessNW05}) in which he showed
that adding a new arc to a transportation network might 
actually degrade the performance of the resulting Wardrop equilibrium.
Here  the performance is measured in terms of the total travel
time experienced by players in a Wardrop equilibrium. 
Let us briefly recall an example
of the Braess paradox. As depicted in Fig.~\ref{fig:braess},
there is a single-source single-destination network and 
we want to send one unit of flow from $s$ to $t$.
On the arcs, we indicate the travel cost per unit as a function of
the congestion; in particular, $1$ means that the travel cost per unit is one independent of the congestion, and $x$
signifies that the travel cost per unit is equal to the
congestion of the arc.
In the left network, the unique Wardrop flow sends evenly
  one-half units along the upper and lower path, respectively. This
  flow is also optimal having total cost of $3/2$.
  Suppose that a new fast road  is built (latency function is reduced from $\infty$ to $0$)
  connecting the two nodes in the middle, as shown in the right-hand-side figure.  
  The new (unique) Wardrop equilibrium sends its flow entirely along the zig-zag path
  having a total cost of $2$ and each player perceives a strictly larger path latency of $2$. 
  This example shows the paradoxical situation that 
  a network infrastructure improvement may actually hurt the resulting travel times of the new Wardrop equilibrium.
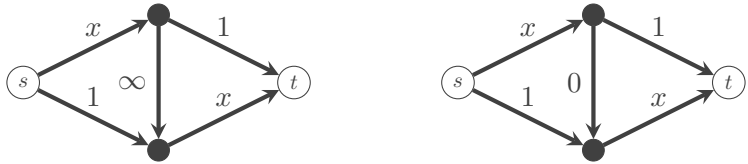
\begin{figure}[h!]
\begin{center}
\begin{tikzpicture}
 
 \begin{scope}[yshift=4.2cm]
 
  \node (fr) at (0, -0.4) {};
  
  \begin{scope}[scale=1.2]
 
 \draw (fr) +(0, -0.2) node[labeledNodeS] (s1) {$s$};
 \draw (s1) +(1.5, 0.75) node[normalNodeS] (v1) {}
 edge[normalEdgeF, <-] node[above] {$x$} (s1) ;
 \draw (s1) +(1.5, -0.75) node[normalNodeS] (v2) {}
 edge[normalEdgeF, <-] node[above] {$1$} (s1) 
 edge[normalEdgeF, <-] node[left] {$\infty$} (v1);
 \draw (s1) +(3, 0) node[labeledNodeS] (t1) {$t$}
 edge[normalEdgeF, <-] node[above] {$1$} (v1)
 edge[normalEdgeF, <-] node[above] {$x$} (v2)
;
 ;
 \end{scope}
 
 \end{scope}
\end{tikzpicture}
\begin{tikzpicture}
 
 \begin{scope}[yshift=4.2cm]

 \node (fr) at (0, -0.4) {};
 
 \draw (fr) ++(-0.5, 1.2) node (cal) {};
 
  \begin{scope}[scale=1.2]
 
 \draw (fr) +(1, -0.2) node[labeledNodeS] (s1) {$s$};
 \draw (s1) +(1.5, 0.75) node[normalNodeS] (v1) {}
 edge[normalEdgeF, <-] node[above] {$x$} (s1) ;
 \draw (s1) +(1.5, -0.75) node[normalNodeS] (v2) {}
 edge[normalEdgeF, <-] node[above] {$1$} (s1) 
 edge[normalEdgeF, <-] node[left] {$0$} (v1);
 \draw (s1) +(3, 0) node[labeledNodeS] (t1) {$t$}
 edge[normalEdgeF, <-] node[above] {$1$} (v1)
 edge[normalEdgeF, <-] node[above] {$x$} (v2)
;
 ;
 \end{scope}
 
 \end{scope}
\end{tikzpicture}
\end{center}
\caption[format=hang]{Example of the Braess paradox.}
\label{fig:braess}
\end{figure}

Let us now consider another type of Braess paradox that may arise
via demand reductions. Note that demand reductions frequently occur in practice, e.g., if commuters switch to the public transport system in case a new railway, tram or underground line has been built.

Consider the example in Fig.~\ref{fig:braess-demand}.
There are three populations $N=\{1,2,3\}$
that want to travel from $s_i$ to $t_i$, for $i=1,2,3$.
In the original instance the demands are $d_1=1,d_2=2$ and $d_3=M$.
The cost function $c(x)$ is defined by
$c(x)=0$ for $0\le x \le M$ and $c(x)=x-M$ for $M\le x$.
The resulting unique Wardrop equilibrium $x^*$ routes the flow
of population $1$ along the direct edge $(s_1,t_1)$. Thus, the total
cost of $x^*$ can be calculated as
$C(x^*)=1\cdot 2 + 2\cdot 2+M\cdot 0=6$. 
Suppose we decrease the demand of population $2$ from $2$ to $d_2=0$.
In the new (unique) Wardrop equilibrium $\bar x$, the flow of population $1$ will be sent
entirely on the path $(s_1,t_2,t_1)$ with a total cost of $C(\bar x)=M+2$.
It follows that for $M>4$, the reduction of demand may
actually hurt the total cost.

\begin{figure}[t!]
\begin{center}
\begin{tikzpicture}
 
 \begin{scope}[yshift=4.2cm]
 
  \node (fr) at (0, -0.4) {};
   \draw (fr) +(0, -0.2) node[labeledNodeS] (s1) {$s_1,s_2$};
 \draw (s1) +(2.5, 0.75) node[labeledNodeS] (v1) {$t_2,s_3$}
 edge[normalEdgeF, <-] node[above] {$x$} (s1) ;
 \draw (s1) +(2.5, -1.75) node[labeledNodeS] (v2) {$t_1,t_3$}
 edge[normalEdgeF, <-] node[above] {$2$} (s1) 
 edge[normalEdgeF, <-] node[right] {$c(x)$} (v1);
 ;
 \end{scope}
\end{tikzpicture}
\hspace{1cm}
 \begin{tikzpicture}[domain=2:4]
  \draw[->] (0,0.375) -- (4.2,0.375) node[right] {$x$};
  \draw[->] (0,0.375) -- (0,3.2);
  \draw[-,very thick] (0,0.375) -- (2.0,0.375) ;
  \draw[-,very thick] (2.0,0.375) -- (3.5,2.45) ;
   \draw[-,dotted] (0,1.75) -- (3,1.75) ;
    \draw[-,dotted] (3,1.75) -- (3,0.375) ;
    \node[text width=2cm] at (0.2,3) {$c(x)$};
     \node[text width=2cm] at (2.8,0.1) {$M$};
     \node[text width=2cm] at (3.7,0.1) {$M+1$};
      \node[text width=2cm] at (0.75,1.75) {$1$};
\end{tikzpicture}
\end{center}
\caption[format=hang]{Example of Braess paradox where a demand reduction
hurts the equilibrium cost. } \label{fig:braess-demand}
\end{figure}
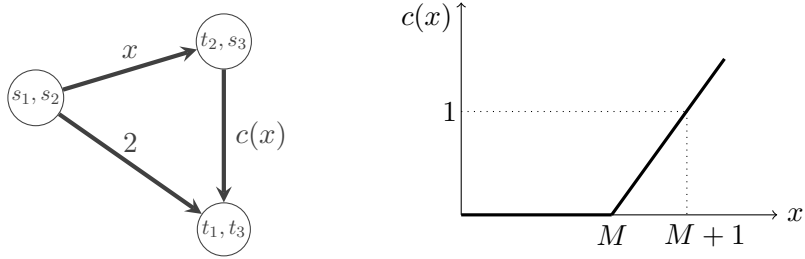

\subsection{Our Results and Techniques.}
We study nonatomic congestion games and
investigate the Braess paradox for arbitrary set systems and for both cost reductions and demand reductions
as explained in Fig.~\ref{fig:braess} and Fig.~\ref{fig:braess-demand}.
Note that there are interesting combinatorial structures of
the allowable subsets beyond paths in a graph: tours (as in the traveling salesman problem),
spanning trees, or Steiner trees (that frequently occur in telecommunication networks).

We differentiate between a \emph{weak} and a \emph{strong} form
of the Braess paradox. Weak Braess paradox occurs, if for the new equilibrium (after
cost and/or demand reductions), there exists
a resource with strictly increased cost. For the strong Braess paradox, there
must exist a player with strictly increased private cost. Note that the strong
Braess paradox implies the weak, and, immunity to
the weak Braess paradox implies immunity to the strong Braess paradox 
(but neither statement holds vice versa in general). 

\paragraph{Weak Braess Paradox.}
 Our first goal is to characterize 
the set of allowable subsets of players so that 
there will be no weak Braess paradox, no matter what kind of continuous and nondecreasing
cost functions are associated with the resources. An informal description of our main result is:
\begin{framed}
 A family of set systems is immune to the weak Braess paradox if and only if for
every set system of the family, the corresponding clutter (i.e., the set system containing only the inclusion-wise minimal sets) consists of bases of a population-specific \emph{matroid} defined on the ground set of resources.
\end{framed}
 We note that matroids have a rich combinatorial structure and include, for instance, the class of games, where each player wants to allocate a spanning tree in a graph.

Technically, our first characterization rests on two new results on the sensitivity
of optimal solutions minimizing 
a continuous, differentiable, nondecreasing and convex separable function (i.e., the Beckman potential) 
over a polymatroid base polytope. We show (cf.~Lemma~\ref{lem:cost-sen}) that if cost functions are shifted downwards, the new global minimum has the property that cost values
evaluated at a new optimal solution only decrease.
The second sensitivity result considers demand reductions which, as
we will argue, can be interpreted in terms of a decomposition
of a polymatroid.
More precisely, for the second sensitivity
result (cf.~Lemma~\ref{lem:dChangePolymat}) we consider a
specific polymatroid base polytope that can be decomposed as a Minkowski sum of
a finite number of polymatroid base polytopes. 
We show that by removing one polymatroid base polytope, any new optimal solution of the Beckmann potential has also decreased cost values. 
The connection of these two results to the Braess paradox 
is drawn by observing that for games with matroid structure, the problem of computing
a Wardrop equilibrium can be reduced to finding a global minimum of the Beckmann potential
over a sum of population-specific polymatroid base polytopes. For this we use the fact that the rank function of a matroid is a submodular function. The two sensitivity results, thus, imply
that for matroid set systems there will be no weak Braess paradox
no matter what kind of cost and/or demands reductions occur.
We prove the ``only if'' direction via exploiting the edge-vector 
characterization of base polytopes due to Tomizawa 
(see \cite[Theorem 17.1]{fujishige2005submodular}).

\paragraph{Strong Braess Paradox.}
Our second result gives a characterization of the occurrence of the strong 
Braess paradox. For this characterization we require that there
is no a priori description on how the individual
strategy spaces of populations interweave.\footnote{A formal definition 
of embeddings of strategy spaces into the resources is given 
in Section~\ref{sec:strongBP}; see Def.~\ref{def:embeddings}.} 
We say that a set system is \emph{universally} immune to
the strong Braess paradox, if it is immune to the strong Braess paradox no matter
how the strategy spaces of populations interweave.
We then obtain:
\begin{framed}
 A family of non-empty set systems containing at least two set systems is universally immune to the strong Braess paradox if and only if
for
every set system of the family, the corresponding clutter consists of bases of a population-specific \emph{matroid} defined on the ground set of resources.
\end{framed}
The ``if'' direction follows directly from our first characterization.
For the ``only if" direction we proceed by contradiction.  
If for a game with at least two populations, there exists a population with non-matroidal set system, then we derive appropriate cost functions on the resources, demands and an embedding of the
strategy spaces into resources such that the resulting game admits the strong Braess paradox.



\subsection{Related Literature.}
The discovery
of the Braess paradox has driven  a considerable amount of literature in different
fields of science ranging from transportation and traffic networks (cf.~\cite{Catoni91,Dafermos1984,frank81,Smith78}), queueing networks (cf.~\cite{CohenK90,Kameda02,Korilis99}), electrical and mechanical networks
(cf.~\cite{CohenH91}), computer science (cf.~\cite{CorSS04,FotakisKLS13,KamedaAKH00,LinRTW11,Rou02,Rough2002,Roughgarden02,Roug06,Roug06,Valiant10}) to economics (cf.~\cite{Poeppe92,Samuelson92}). For an overview of further works, we refer to the 
website maintained by Dietrich Braess~\cite{Braess-Web}.

In light of this substantial body of literature it seems surprising that to date
little is known regarding general  characterizations of 
the occurrence of the Braess paradox. Steinberg and Zangwill~\cite{Steinberg83}
and later Dafermos and Nagurney~\cite{Dafermos1984}, Pas and Principio~\cite{Pas1997}
and Hagstrom and Abrams~\cite{Hagstrom01}
derived \emph{instance-dependent}  necessary and sufficient conditions for the Braess paradox
to occur.
Here, instance-dependent means that these conditions depend on the concrete demand matrix, the cost functions and the network topology used. Hence, if for a given network topology the used cost functions or demand matrices are not known a priori, these works do not offer any insight on the occurrence of the Braess paradox. This situation occurs naturally 
whenever a network is build from scratch (as in telecommunications or mechanical networks)
or extended (as in traffic networks) and the traffic matrix and realized cost functions
are not known precisely. Even if the traffic matrix can be well estimated, the cost functions are subject to changes as street improvements and construction works are
continuously ongoing changing the street characteristics.
In such cases, it would be valuable to characterize networks 
that are not vulnerable to Braess paradox  for any instantiation of the demand matrix and the cost functions.
Milchtaich~\cite{Milchtaich06GEB} derived such a characterization by showing
that for undirected single $o$-$d$ networks, series-parallel graphs  form
the maximal graph class that is immune to the (strong) Braess paradox
no matter how many commuters travel and what kind of (continuous and nondecreasing)
cost function is used. Note that series-parallel networks are precisely the class of networks that do not contain the network in Fig.~\ref{fig:braess} as a topological minor.
He further proved that \emph{any} undirected single $o$-$d$ graph
that is not series-parallel can be equipped with carefully chosen costs and demands so that
the resulting instance admits the strong Braess paradox. His result, thus, provides a characterization
of undirected single $o$-$d$ graph topologies that are immune to the strong Braess paradox.
Very recently, Chen et al.~\cite{ChenDH15} and Cenciarelli et al.~\cite{Cenciarelli16} generalized the characterization of Milchtaich
towards directed graphs and allowing for multiple commodities.\footnote{There are related characterizations  of series-parallel graphs in different contexts, e.g., uniqueness and Pareto efficiency of Nash equilibria (cf.~\cite{Milchtaich05,Richman07}) and strong equilibria in congestion games
(cf.~\cite{Epstein09topo,Holzman97,HolzmanL03}).}

Some remarks are in order
to explain how our work differs from that of Milchtaich~\cite{Milchtaich06GEB}. As explained above, Milchtaich considered undirected single $o$-$d$ networks
and characterizes the maximal network topology that is immune to the strong Braess paradox. 
In particular, this implies that the
resources form an undirected graph, the strategy spaces of players are symmetric as the strategies
are the set of $o$-$d$ paths.
In contrast, we consider (general) nonatomic congestion games with \emph{asymmetric} strategy spaces, where for a player the allowable set of subsets of resources can have \emph{any} combinatorial
structure. Interesting cases beyond paths in graphs
include tours,
trees, or Steiner trees 
all in a
directed or undirected graph.  Additionally, we consider the more general
case of cost and/or demand reductions that might increase the equilibrium cost.

For our characterization of the strong Braess paradox, there is one important
additional difference to the result of Milchtaich. In contrast to Milchtaich's
characterization, we do not prescribe a priori how the sets of allowable subsets
of players are actually embedded in the ground set of resources, or said differently,
how the strategy spaces interweave. 


It is fair to say that matroids play a special role in the wide area of (integral) congestion games.
This connection was first discovered by Ackermann, R\"oglin and V\"ocking in the
important papers~\cite{Ackermann08,Ackermann09}.
In~\cite{Ackermann09}, they showed
that both weighted and player-specific
congestion games admit (pure Nash) equilibria in the case of matroid congestion
games, i.e., if the strategy space of each player consists of the bases of
a matroid on the set of resources. They also showed that the matroid property
is \emph{maximal} in the sense that whenever there are two players both having 
 allowable sets of resources that are not matroidal, then, there is a prescribed embedding of the sets
 into the ground set of resources and cost functions so that the resulting game does not have an equilibrium.  It should be noted that our characterization of the weak Braess paradox
is direct (relying on a polyhedral combinatorics point of view) and does not rely on the flexibility of embeddings. Also
 the ``only if'' direction of our characterization of the strong Braess paradox exhibits a difference to that used in~\cite{Ackermann09}. In~\cite{Ackermann09}
for obtaining counter examples it is required that the strategy space of \emph{all} players are non-matroidal,
whereas we only require that at least one player (or population in our setting) has a non-matroid set system, thus, allowing for a characterization.

Harks and Peis~\cite{HarksP14} considered a variant of congestion games, namely \emph{resource buying games},
in which players jointly design a resource infrastructure and share the congestion-dependent costs of the resources arbitrarily.
They showed that for marginally non-increasing cost functions such resource buying games always admit an equilibrium as long as the players' strategy spaces
form the base set of a matroid, while for non-matroid set systems, there is a two-player game with marginally non-increasing costs that does not admit an equilibrium. Finally, 
Harks et al.~\cite{harks2014resource} showed that integral-splittable congestion games with semi-convex cost functions always admit an equilibrium whenever each player's strategy space forms an integral polymatroid.

\section{Nonatomic Congestion Games.}\label{sec:model}
A tuple $\mathcal{M} = (N, E, (\mathcal{S}_i)_{i\in N}, (c_{e})_{e \in E},(d_i)_{i\in N})$ is called a
\emph{nonatomic congestion model} if $N = \{1,\dots,n\}$ is a non-empty, finite
set of populations and $E = \{e_1,\dots, e_m\}$ is a non-empty, finite set of
\emph{resources}.
Players are infinitesimally small, and each population $i$ consists of a continuum of players represented by the interval $[0,d_i]$ for some $d_i>0$.
For each population $i \in N$, the set
$\mathcal{S}_i$ is a non-empty, finite set of subsets
$S \subseteq E$ available to each player of population $i$. 
Each player of population $i\in N$ selects a strategy $S\in \S_i$, 
which leads to a strategy distribution 
$(x^i_S)_{S\in\S_i}$ satisfying $\sum_{S\in\S_i} x^i_S= d_i$ and $x^i_S\ge 0$ $(\forall S\in \S_i)$.
We denote by $\S$ the direct sum $\{(i,S)\mid i\in N, S\in \S_i\}$ of 
$\S_i$ $(i\in N)$, which represents 
the collection of all strategies of all players.
After each player of every population has chosen a strategy, we arrive at the overall strategy distribution $x=(x^i_S)_{(i,S)\in\S}$. 
The induced load of $x$ on $e$ is denoted by $x_e=\sum_{(i,S)\in \S:e\in S}x^i_S$ (assuming
every strategy $S$ of $(i,S)\in\S$ contains each resource at most once
and that for every $i\in N$, the rate of consumption of every $S\in\S_i$
on resource $e\in S$ is equal to one). Thus, we can compactly represent
the set of feasible strategy distributions by the following
polytope
\[ P(\mathcal{M}):=\left\{x\in \R_{\geq 0}^{\S} \;\middle\vert\;  \sum_{S\in\S_i} x^i_S= d_i \text{ for all }i\in N\right\},\]
where for $x\in \R_{\geq 0}^{\S}$ the value of $x$ on $(i,S)\in\S$ is denoted by $x^i_S$. 
We denote by 
$x_{i,e}=\sum_{S\in \S_i:e\in S}x^i_S$ the load of population $i$ on resource $e$.
Hence, $x_e = \sum_{i\in N} x_{i,e}$.
We impose the following assumption on cost functions.
\begin{assumption}\label{ass:costs}
For every resource $e \in E,$ 
we consider a \emph{cost function} $c_{e} : \R_{\geq 0} \rightarrow \R_{\geq 0}$
which is 
non-negative, continuous and nondecreasing. 
\end{assumption}
If in strategy distribution $x$, a player of population $i$ selects $S\in\S_i$, she perceives the disutility, or private cost, of 
\[ \pi_{i,S}(x)=\sum_{e\in S} c_{e}(x_e).\]

Since we are often interested in the load on the resources, we define
for every polytope $P(\mathcal{M})\subseteq \mathbb{R}_{\geq 0}^{\S}$
of feasible strategy distributions,
a corresponding polytope $\tilde{P}(\mathcal{M})\subseteq \mathbb{R}_{\geq 0}^{E}$
that captures all possible load vectors on the resources obtained by playing
a feasible strategy distribution, i.e.,
\begin{equation*}
\tilde{P}(\mathcal{M}) := \left\{\sum_{i\in N}\sum_{S\in \S_i} x^i_S\cdot \chi_{S} 
\;\middle\vert\;  x\in P(\mathcal{M})\right\},
\end{equation*}
where $\chi_S\in \{0,1\}^E$ for $S\subseteq E$ is the characteristic 
vector of $S$; hence,
$\chi_S(e)=1$ if $e\in S$ and $\chi_S(e)=0$ if $e\in E\setminus S$.
Note that defining  for each population $i\in N$ a polytope
\begin{equation*}
\tilde{P}_i(\mathcal{M}) := \left\{\sum_{S\in \S_i} x^i_S\cdot \chi_{S} 
\;\middle\vert\;  \sum_{S\in\S_i} x^i_S= d_i,\ x^i_S\ge 0\ (\forall S\in \S_i)
\right\},
\end{equation*}
we have $\tilde{P}(\mathcal{M})=\sum_{i\in N}\tilde{P}_i(\mathcal{M})$, 
the Minkowski sum of polytopes $\tilde{P}_i(\mathcal{M})$ $(i\in N)$.

\subsection{Nonatomic Matroid Congestion Games}\label{subsec:matroids}
A matroid is a tuple $M = (E,\I)$, where $E$ is a finite set, called the \emph{ground set}, and $\mathcal{I}\subseteq 2^E$ is a nonempty family of subsets of $E$, called \emph{independent sets}, such that: (i) if $X\in \mathcal{I}$ and $Y\subseteq X$, then $Y\in \mathcal{I}$, and (ii) if $X,Y\in \mathcal{I}$ with $|X|> |Y|$, then $\exists$ $e\in X\setminus Y$ such that $Y\cup \{e\}\in \mathcal{I}$.
The inclusionwise maximal independent sets of $\mathcal{I}$ are called \emph{bases} of matroid $M$, and usually denoted by $\B$, or $\B(M)$.
See~\cite{oxley1992matroid,Schrijver03,welsh2010matroid} for more information on matroids.

A nonatomic congestion model
$\mathcal{M}$ is called \emph{matroid congestion model} if for every
$i~\in~N$ there is a matroid $M_i = (E,\I_i)$ such that
$\mathcal{S}_i$ equals the set of bases of $M_i$. 
In case of nonatomic matroid congestion games we will write $\B_i$ instead of $\S_i$, $B_i$ instead of $S_i$ and $B$ instead of $S$.
We give three examples in the area of queueing, facility location and minimum spanning tree games.
\begin{example}[Queueing Games (cf.~\cite{Korilis99})]
There is a set $Q=\{q_1,\dots,q_m\}$ of $M/M/1$ queues served in a first-come-first-served fashion
and a set of $N=\{1,\dots,n\}$ independent Poisson arrivals of packets, where
the arrival rates are denoted by $d_1,\dots, d_n$. 
Every queue $q$ has a single server with exponentially distributed service time 
with mean $1/\mu_q$, $\mu_q>0$. Each packet is routed to a single queue $q$ out of a
set of allowable queues depending on the type. Given a distribution of
packets $x\in \R^m_{\geq 0}$, the mean delay of queue $q$ can be
computed as $c_{q}(x_q)=\frac{1}{\mu_{q}-x_q}.$
In this case, the sets
$\S_i$, $i\in N,$ are uniform rank-$1$ matroids. 
\end{example}

\begin{example}[Facility location games with supply functions]
Matroid congestion games can also be used as a modeling tool for resource buying games with
supply functions. More precisely, whereas so far we interpreted the cost of a resource
mostly in terms of a ``disutility'' like congestion, one can as well interpret costs
as actual renting or buying costs of a resource that depend on the demand, i.e., the more
users use a resource, the higher its price.
We provide an example phrased in the context of facility location, though the described
approach also applies to further settings. Consider a finite
set $E=\{e_1,\dots, e_m\}$ of resources in different locations, and a set of populations
$N=\{1,\dots, n\}$.
The resources could for example correspond to data centers, and the players
have to decide which data centers to use to serve their clients.
A population groups together players that want to serve clients
within the same areas. 
Each player in population $i\in N$---where, as usual, we assume
that there is a total ``mass'' of $d_i$ players in population $i$---desires
to use some number $k_i\in \mathbb{Z}_{> 0}$ of different data centers,
to cover $k_i$ different areas.
Each area $j$ can be served by any data center within a given set $S_j\subseteq E$.
The sets $S_j$ may overlap, even for the same player $i$. However, due to reliability
reasons, a player cannot use the same data center more than once.
Furthermore, to model an offer/demand interplay, the cost $c_e$ for using a particular
data center $e\in E$ depends on the total load of players who use data center $e$.
The higher the load on a data center, the larger the cost to use it.
In this setting, the strategy space of each population $i\in N$ corresponds
to a \emph{transversal matroid} described by the sets $S_j$ for all areas $j$ that
population $i$ wants to serve. 

\end{example}

\begin{example}[MST Games]
We are given an undirected graph $G=(V,E)$ with non-negative, continuous and non-decreasing edge cost functions $c_{e}(\ell), e\in E$. In a minimum spanning tree (MST) game, every population $i$
is associated with a demand interval $[0,d_i]$ and
a subgraph $G_i$ of $G$. A strategy distribution for population $i$ is to
route its demand along the spanning trees of $G_i$. Formally,  the edges correspond
to the resources and the sets
$\S_i$, $i\in N$, are the spanning trees of $G_i$. $M_i$ is called 
a \emph{graphic matroid}. 
\end{example}

\subsection{Wardrop Equilibria.}
A Wardrop equilibrium $x$ for a nonatomic congestion game
is a strategy distribution $x$ such that every player of every population
uses a strategy with minimum cost. Formally, 
\[ \pi_i(x):=\sum_{e\in S} c_{e}(x_e)\leq \sum_{e\in S'}c_{e}(x_e), 
\text{ for any } S,S'\in\S_i \text{ with }x^i_{S}>0, 
\text{ for all }  i\in N.\]
We recall the following characterization of Wardrop equilibria 
which implies their existence.
\begin{theorem}\label{thm:beck} {\rm (Beckmann et al.~\cite[Section 3.1.2]{Beckmann56})}
A strategy distribution $x$ is a Wardrop equilibrium if and only if it
is an optimal solution to
\begin{align}\label{beck:potential} \min_{x\in P(\M)}\left\{\Phi(x):=\sum_{e\in E} \int\limits_{0}^{x_e} c_e(t)\;dt\right\}. 
\end{align}
We call $\Phi$ the Beckmann potential.
\end{theorem}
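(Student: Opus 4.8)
The plan is to reduce the statement to the classical first-order optimality condition for a smooth convex function over a polytope, and then to recognise that condition as exactly the Wardrop equilibrium condition.

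First I would record that $\Phi$ is convex and continuously differentiable on $\R_{\geq 0}^{\S}$. For each $e\in E$ the coordinate $x\mapsto x_e$ is a nonnegative linear functional of $x$, and by Assumption~\ref{ass:costs} the one-dimensional function $t\mapsto\int_0^t c_e(s)\,ds$ is convex (since $c_e$ is nondecreasing) and $C^1$ with derivative $c_e$ (since $c_e$ is continuous, by the fundamental theorem of calculus). Composing and summing over $e$ shows $\Phi$ is convex and $C^1$, with $\partial\Phi/\partial x^i_S(x)=\sum_{e\in S}c_e(x_e)=\pi_{i,S}(x)$. Since $P(\M)$ is a nonempty compact polytope and $\Phi$ is continuous, a minimizer exists; this already yields existence of Wardrop equilibria once the equivalence below is proved.

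Next I would invoke that a convex $C^1$ function $\Phi$ attains its minimum over the convex set $P(\M)$ at $x$ if and only if $\sum_{(i,S)\in\S}\pi_{i,S}(x)\,(y^i_S-x^i_S)\ge 0$ for every $y\in P(\M)$. Because the equality constraints defining $P(\M)$ decouple across populations, $P(\M)=\prod_{i\in N}\{z\in\R_{\geq 0}^{\S_i}:\sum_{S\in\S_i}z_S=d_i\}$, so this variational inequality is equivalent to its per-population form: for every $i\in N$ and every feasible $y^i$, $\sum_{S\in\S_i}\pi_{i,S}(x)(y^i_S-x^i_S)\ge 0$. Finally I would check that, over the scaled simplex of feasible $y^i$, this holds for all $y^i$ if and only if $x^i$ is supported on the set of strategies minimizing $\pi_{i,\cdot}(x)$, i.e. $\pi_{i,S}(x)\le\pi_{i,S'}(x)$ whenever $x^i_S>0$: for the forward direction, shifting an infinitesimal amount of mass from a used strategy $S$ to any $S'\in\S_i$ is a feasible direction, and the inequality forces $\pi_{i,S'}(x)\ge\pi_{i,S}(x)$; for the converse, if every used strategy attains the common value $\mu_i:=\min_{S\in\S_i}\pi_{i,S}(x)$, then $\sum_{S\in\S_i}\pi_{i,S}(x)y^i_S\ge\mu_i d_i=\sum_{S\in\S_i}\pi_{i,S}(x)x^i_S$ for any feasible $y^i$. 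The support condition is precisely the definition of a Wardrop equilibrium, so chaining the three equivalences gives the theorem.

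The only slightly delicate point is the smoothness of $\Phi$: since the cost functions are assumed merely continuous, $\Phi$ need not be twice differentiable, but $C^1$-convexity is all that the first-order minimality criterion needs, and it follows from the fundamental theorem of calculus. The product structure of $P(\M)$ and the translation between the variational inequality and the combinatorial Wardrop condition are then routine.
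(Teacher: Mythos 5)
Your proof is correct. The paper does not prove Theorem~\ref{thm:beck} at all --- it is stated as a citation to Beckmann et al.~\cite{Beckmann56} --- and your argument (convexity and $C^1$-smoothness of $\Phi$ via the fundamental theorem of calculus, the first-order variational inequality over $P(\M)$, its decoupling across the product of scaled simplices, and the identification of the per-population condition with the support-on-argmin, i.e.\ Wardrop, condition) is precisely the standard derivation underlying the cited result, with the one delicate point (only $C^1$, not $C^2$, regularity is available) correctly handled.
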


Notice that the problem of finding the minimum value of the Beckmann potential can equivalently
be written in terms of $\tilde{P}(\mathcal{M})$ as the following minimization problem:
\begin{equation*}
\min_{x\in \tilde{P}(\mathcal{M})} \left\{\sum_{e\in E} \int_0^{x_e} c_e(t)\; dt\right\}.
\end{equation*}
Here it should be noted that $x=(x_e)_{e\in E}\in\mathbb{R}^E$, 
while $x$ appearing in (\ref{beck:potential}) is a strategy distribution 
in $\mathbb{R}^\S$.
Later in our results, we will often refer to this equivalent
version of the problem of minimizing the Beckmann potential.
For simplicity, we will use $\Phi(x)$ also for the Beckmann potential
for load vectors 
$x\in \tilde{P}(\mathcal{M})$.

\begin{remark}\label{l.potential}
Using that every Wardrop equilibrium  
$x\in {P}(\mathcal{M})$ is a global minimum of~\eqref{beck:potential}, 
we obtain the following well-known properties 
{\rm (}cf.~{\rm \cite{CorreaS2011})}. 
If cost functions $(c_e)_{e\in E}$ are strictly increasing, 
the Wardrop equilibrium load vector $(x_e)_{e\in E}\in \tilde{P}(\mathcal{M})$ is unique {\rm (}but there can be different 
decompositions of the demands among the subsets, that is,
there can be $x,y\in P(\mathcal{M}), x\neq y$ 
with $x_e=y_e$ for all $e\in E${\rm )}. For the case 
of nondecreasing costs, the vector of costs $(c_e(x_e))_{e\in E}$ is 
unique under the possibly non-unique equilibrium load vectors.
\end{remark}

\subsection{The Braess Paradox.}
Recall the examples
of Braess paradox presented in Fig.~\ref{fig:braess} and Fig.~\ref{fig:braess-demand}. In these examples,
the equilibrium flows on two (network) congestion models $\M$ and
$\bar \M$ are compared to each other, where $\bar \M$ is related to $\M$
by simply reducing some of the cost functions (in case of the example
in Fig.~\ref{fig:braess}, only one cost function is reduced from $\infty$ to $0$)
and/or reducing the demands of the populations.

In this work, we allow for general cost reductions of
the form $\bar c_e(t)\leq c_e(t)$ for all $t\geq 0$ and $e\in E$
and general demand reductions $\bar d_i\leq d_i, i\in N$.
We denote the changed model by $\bar{\mathcal{M}}$.
Note that for both models $\mathcal{M}$ and $\bar{\mathcal{M}}$,
the sets of allowable subsets $(\S_i)_{i\in N}$ remain the same.
We define the following notion of the weak and strong Braess paradox.

\begin{definition}[Weak/Strong Braess paradox]\label{def:braess}
Let $E=\{e_1,\dots,e_m\}$ be a finite set of resources.
A family of set systems $(E,\S_i)_{i\in N}$ with $\S_i\subseteq 2^E$ for all $i\in N$ admits the \emph{weak} Braess paradox, if there are two
nonatomic congestion models 
$\mathcal{M} = (N, E,(\S_i)_{i\in N}, (c_e)_{e \in E},(d_i)_{i\in N})$
and $\bar{\mathcal{M}} = (N, E, (\S_i)_{i\in N}, (\bar c_e)_{e \in E},(\bar d_i)_{i\in N})$,  
with
$\bar c_e(t)\leq c_e(t)$ for all $t\geq 0$ and $\bar d_i\leq d_i$ for all $i\in N$,
and two Wardrop
equilibria $x$ and $\bar x$ for $\M$ and $\bar{\mathcal{M}}$, respectively, 
such that 
\[\tag{\text{weak BP}}\text{there is\ \ $e\in E$ with } c_e(x_e)<\bar c_e(\bar x_e).\]
$(E,\S_i)_{i\in N}$ admits the \emph{strong} Braess paradox, if there is
\ $i\in N$ with $S,S'\in\S_i, x^i_S>0,{\bar x}^i_{S'}>0$ such that
\[\tag{\text{strong BP}} \pi_i(x)=\sum_{e\in S}c_e(x_e)<\sum_{e\in S'}\bar c_e(\bar x_e)=\pi_i(\bar x).\]
We say that $(E,\S_i)_{i\in N}$ is \emph{immune} to the weak/strong Braess paradox
if no such $\M,\bar \M$ exist.
\end{definition}

\begin{remark}
The strong Braess paradox implies the weak, but not vice versa.
On the other hand, if a family of set systems $(E,\S_i)_{i\in N}$ is immune to the weak Braess paradox,
it is also immune to the strong Braess paradox. Moreover, if $(E,\S_i)_{i\in N}$
satisfies that for every resource $e\in E$, there is a player $i$ with $\S_i=\{\{e\}\}$, then,
weak and strong Braess paradox are equivalent.
\end{remark}
The examples in Fig.~\ref{fig:braess} and Fig.~\ref{fig:braess-demand} already show that there are quite simple set systems $(E,\S_i)_{i\in N}$ that admit
the strong (and thus the weak) Braess paradox.

The driving question of this paper is the following:
\begin{framed}
What is a characterizing property of the set systems $(E,\S_i)_{i\in N}$ so that they 
are immune to the weak or strong Braess paradox?
\end{framed}

\section{A Characterization of the Weak Braess Paradox.}
For our first characterization of the weak Braess paradox, we
define for a set system $(E,\S_i)$
the \emph{clutter} 
as:
\[ (\S_i)^{\min}:=\{ U\in \S_i |\; \nexists \;T\in \S_i \text{ with }T\subset U\}.\]
The set system $(E,(\S_i)^{\min})$ (for which we also use the term ``clutter") 
contains all inclusion-wise minimal subsets of $\S_i$.

Our first main result gives  a complete characterization of the weak Braess paradox.
\begin{framed}
\begin{theorem}\label{thm:braess}
Let $(E,\S_i)_{i\in N}$ be a family of set systems, i.e., $\S_i\subseteq 2^E$ for all $i\in N$.
Then, the following statements are equivalent.
\begin{enumerate}
\item[{\rm (I)}]
 The clutter $(E,(\S_i)^{\min})$ forms the base set 
of a matroid $M_i=(E,\I_i)$ for all $i\in N.$
\item[{\rm (II)}]
$(E,\S_i)_{i\in N}$ is immune to the weak Braess paradox.
\end{enumerate}
\end{theorem}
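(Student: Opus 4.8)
The plan is to prove the two implications separately. For $(\mathrm I)\Rightarrow(\mathrm{II})$ I would pass from the given set systems to polymatroid base polyhedra and then invoke the two sensitivity results, Lemmas~\ref{lem:cost-sen} and~\ref{lem:dChangePolymat}. For $(\mathrm{II})\Rightarrow(\mathrm I)$ I would argue by contraposition: from a population whose clutter is \emph{not} the base set of a matroid I would extract, via Tomizawa's edge characterization of matroid base polytopes~\cite[Theorem~17.1]{fujishige2005submodular}, a ``bad'' edge of the polytope spanned by that clutter, and then hand-craft two congestion models around it that exhibit the weak Braess paradox.

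\smallskip
\emph{Direction $(\mathrm I)\Rightarrow(\mathrm{II})$.} First I would reduce to $\S_i=(\S_i)^{\min}$ for all $i\in N$: replacing, in a feasible load vector, each used strategy $S$ by a minimal $S'\in\S_i$ with $S'\subseteq S$ produces a coordinatewise smaller vector feasible for the clutter, and since $\Phi$ is coordinatewise nondecreasing (Assumption~\ref{ass:costs}), the minimum of $\Phi$ over $\tilde P(\M)$ is attained on the clutter polytope; by Theorem~\ref{thm:beck} and Remark~\ref{l.potential} this means the equilibrium cost vectors of $\M$ and of its clutter version coincide, with cost and demand reductions corresponding to one another, so it suffices to prove immunity for the clutter family. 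Thus assume $\S_i=\B_i=\B(M_i)$. Then $\tilde P_i(\M)=d_i\cdot\operatorname{conv}\{\chi_B\mid B\in\B_i\}$ is the base polyhedron $\mathrm B(d_i\rk_i)$ of the polymatroid rank function $d_i\rk_i$, and $\tilde P(\M)=\sum_{i\in N}\mathrm B(d_i\rk_i)=\mathrm B\!\bigl(\sum_{i\in N}d_i\rk_i\bigr)$ is again a polymatroid base polyhedron. Given $\M,\bar\M$ with $\bar c_e\le c_e$ and $\bar d_i\le d_i$, I would split $\M\to\bar\M$ into (a) reducing only costs, with $\tilde P$ fixed, so that Lemma~\ref{lem:cost-sen} makes the equilibrium cost vector decrease coordinatewise; and (b) reducing only demands, writing $d_i\rk_i=\bar d_i\rk_i+(d_i-\bar d_i)\rk_i$, hence $\tilde P(\M)=\tilde P(\bar\M)+\sum_{i\in N}\mathrm B((d_i-\bar d_i)\rk_i)$, a Minkowski decomposition into polymatroid base polyhedra; removing the summands $\mathrm B((d_i-\bar d_i)\rk_i)$ one by one and applying Lemma~\ref{lem:dChangePolymat} at each step again makes the equilibrium cost vector decrease. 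Chaining (a) and (b) gives $c_e(x_e)\ge\bar c_e(\bar x_e)$ for every $e$, so no weak Braess paradox occurs.

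\smallskip
\emph{Direction $(\mathrm{II})\Rightarrow(\mathrm I)$ (contrapositive).} Assume $\mathcal C:=(\S_i)^{\min}$ is not a matroid base set for some $i\in N$; I describe the construction for $|N|=1$ and patch the general case at the end. Let $Q:=\operatorname{conv}\{\chi_B\mid B\in\mathcal C\}$, whose vertices are exactly the $\chi_B$, $B\in\mathcal C$, since $\mathcal C$ is an antichain. If $\mathcal C$ is not equicardinal, then since the $1$-skeleton of $Q$ is connected while an edge of direction $\chi_a-\chi_b$ preserves cardinality, $Q$ has an edge $[\chi_{B_1},\chi_{B_2}]$ with $|B_1\,\triangle\,B_2|\ge 3$; if $\mathcal C$ is equicardinal but not a matroid base set, Tomizawa's characterization yields such an edge directly. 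In either case fix (after possibly swapping $B_1,B_2$) distinct $a_1,a_2\in B_1\setminus B_2$, choose $c^0\in\R^E_{\ge 0}$ such that $\sum_{e\in B}c^0_e$, over $B\in\mathcal C$, is minimized exactly by $B_1$ and $B_2$ (possible because $[\chi_{B_1},\chi_{B_2}]$ is a face of $Q$; in the equicardinal case non-negativity is obtained by adding a constant to $c^0$), and set $\delta:=\min\{\sum_{e\in B}c^0_e-\sum_{e\in B_1}c^0_e\mid B\in\mathcal C\setminus\{B_1,B_2\}\}>0$ and $k:=|B_1\setminus B_2|\ge 2$. Now let $\M$ have demand $1$, cost $c_e(t)=c^0_e$ for $e\in(B_1\cap B_2)\cup(E\setminus(B_1\cup B_2))$, and $c_e(t)=c^0_e+\eta t$ for $e\in(B_1\setminus B_2)\cup(B_2\setminus B_1)$, with a sufficiently small $\eta>0$; a direct check shows that the distribution placing mass $\tfrac12$ on $B_1$ and $\tfrac12$ on $B_2$ is a Wardrop equilibrium (its load vector is $\tfrac12(\chi_{B_1}+\chi_{B_2})$, $B_1$ and $B_2$ then have equal cost $\sum_{e\in B_1}c^0_e+\tfrac{k\eta}{2}$, and every other allowable set costs at least $\sum_{e\in B_1}c^0_e+\delta$, which is larger for small $\eta$). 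For $\bar\M$ I would lower only $c_{a_1}$, to $\bar c_{a_1}(t)=c^0_{a_1}-\theta+\eta t$ with a small $\theta\in(0,k\eta)$; the analogous check shows that the distribution with mass $\mu:=\tfrac12+\tfrac{\theta}{2k\eta}\in(\tfrac12,1)$ on $B_1$ and $1-\mu$ on $B_2$ is a Wardrop equilibrium of $\bar\M$, with load $\mu$ on $a_2$. Since $a_2\ne a_1$ (here $|B_1\setminus B_2|\ge 2$ is used) and the cost of $a_2$ was not changed, uniqueness of the equilibrium cost vectors (Remark~\ref{l.potential}) yields
\[ c_{a_2}(x_{a_2})=c^0_{a_2}+\tfrac{\eta}{2}<c^0_{a_2}+\eta\mu=\bar c_{a_2}(\bar x_{a_2}), \]
i.e.\ the weak Braess paradox in the sense of Definition~\ref{def:braess}. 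For $|N|\ge 2$ I would keep this instance for population $i$, give every other population some fixed unreduced demand $\varepsilon>0$ with identical (otherwise arbitrary) cost functions in $\M$ and $\bar\M$, and use continuity of the equilibrium cost vectors in the demands: for $\varepsilon$ small enough the strict inequality above is preserved.

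\smallskip
The step I expect to be the main obstacle is the ``only if'' gadget: one has to check \emph{robustly} that both constructed equilibria are supported exactly on $\{B_1,B_2\}$ — this is exactly where the edge property (i.e.\ the strict gap $\delta>0$, which for non-equicardinal clutters must still be exhibited) and the calibrated smallness of $\eta$ and $\theta$ enter — and one has to reconcile the constraint $c_e\ge 0$ with the externally given edge normal $c^0$, which is immediate for equicardinal clutters but needs a short extra argument (or a direct ad hoc instance) otherwise. On the $(\mathrm I)\Rightarrow(\mathrm{II})$ side, beyond the routine reduction to $\tilde P(\M)=\mathrm B\bigl(\sum_i d_i\rk_i\bigr)$ and the Minkowski decomposition that encodes demand reduction, the substance is entirely carried by Lemmas~\ref{lem:cost-sen} and~\ref{lem:dChangePolymat}.
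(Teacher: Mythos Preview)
Your direction $(\mathrm I)\Rightarrow(\mathrm{II})$ is essentially the paper's argument: reduce to the clutters (the paper does this via Lemma~\ref{lem:clutter}), identify $\tilde P(\M)$ with a polymatroid base polytope, and apply Lemmas~\ref{lem:cost-sen} and~\ref{lem:dChangePolymat} in sequence.

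Your direction $(\mathrm{II})\Rightarrow(\mathrm I)$ has the right strategy---locate a ``bad'' edge and build affine cost functions around it---but there are two concrete problems in the non-equicardinal case, and they are exactly where the paper's approach diverges from yours.

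First, with your cost functions the $\tfrac12$--$\tfrac12$ split is \emph{not} a Wardrop equilibrium unless $k_1:=|B_1\setminus B_2|=|B_2\setminus B_1|=:k_2$. At load $\lambda\chi_{B_1}+(1-\lambda)\chi_{B_2}$ one gets $\pi(B_1)=\sum_{e\in B_1}c^0_e+k_1\eta\lambda$ and $\pi(B_2)=\sum_{e\in B_2}c^0_e+k_2\eta(1-\lambda)$, so equality forces $\lambda=k_2/(k_1+k_2)$, not $\tfrac12$. This is patchable (the shifted equilibrium after lowering $c_{a_1}$ becomes $\mu=\lambda+\theta/((k_1+k_2)\eta)$ and the argument survives), but as written your computed costs on $a_2$ are wrong.

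Second, and more seriously, you need $c^0\in\R^E_{\ge 0}$ that exposes exactly the edge $[\chi_{B_1},\chi_{B_2}]$ of $Q$. You justify this only in the equicardinal case (``add a constant''), and you flag the other case as needing ``a short extra argument''. But an edge of $Q$ is exposed by a non-negative functional precisely when it is also an edge of the dominant $Q^\uparrow$, and nothing in your argument guarantees this for the particular bad edge you pick. The paper sidesteps the issue entirely by working with $Q^\uparrow$ from the start: Corollary~\ref{cor:charDomMatBasePolytope} says that $Q$ fails to be a matroid base polytope iff $Q^\uparrow$ has a bad edge, and any supporting hyperplane of a dominant has strictly negative normal, so the derived cost functions in Lemma~\ref{lem:costForPUp} are automatically strictly positive. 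The paper's cost construction is also different in detail: it takes $c_e(t)=-h_e+2\gamma(t-x^0_e)$ so that the Beckmann potential becomes $-\langle h,x\rangle+\gamma\|x-x^0\|_2^2$ up to a constant, which forces the \emph{midpoint} $x^0$ to be the unique minimizer regardless of $k_1,k_2$.

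Finally, your $|N|\ge 2$ patch via ``continuity of equilibrium cost vectors in the demands'' is plausible but unproved. The paper avoids this by working with the aggregate $P^\uparrow=\sum_{i\in N}P_i^\uparrow$ and invoking Lemma~\ref{lem:minkowskiEDDom} (edge directions of a Minkowski sum of dominants are the union of the summands' edge directions): a bad edge of any single $P_i^\uparrow$ is already a bad edge of $P^\uparrow$, so the cost construction is carried out once on the full game with all $d_i=1$.
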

\end{framed}

\subsection{Proof of Theorem \ref{thm:braess}: (I) $\Rightarrow$ (II).}
\noindent
The proof of (I) $\Rightarrow$ (II) consists of a number of steps 
organized as follows.

In the first step we prove that if every clutter $(E,(\S_i)^{\min})$
corresponds to bases of some matroid $M_i=(E,\I_i)$ for all $i\in N$, then  
$(E,(\S_i)^{\min})_{i\in N}$ is immune to the weak Braess paradox.
For showing this, we first model the set of feasible strategy distributions of a 
nonatomic matroid congestion game via a suitably defined \emph{polymatroid base polytope}.
This way,  the problem of computing
a Wardrop equilibrium of a matroid congestion model can be interpreted as the problem to find a global minimum of 
a separable convex function (i.e., the Beckman potential)
over a sum of population-specific polymatroid base polytopes which itself
is a polymatroid base polytope (see \cite{edmonds1970submodular}).

In the next steps, we prove two sensitivity results for this class of optimization problems stating that 
whenever (i) cost functions or (ii) demands are decreased, any global minimizer
of the Beckmann potential has the property that the new induced cost values 
component-wise decrease. This implies that the weak Braess paradox does not occur.

In the final step of (I) $\Rightarrow$ (II), we prove that if the 
family of clutters 
$(E,(\S_i)^{\min})_{i\in N}$ is immune to the weak Braess paradox, 
then so is the family of 
set systems $(E,\S_i)_{i\in N}$.

\paragraph{Polymatroids.}\label{par:polymatroid}
In order to define polymatroids we first have to introduce submodular functions.
A function $\rho:2^E \rightarrow \mathbb{R}$ is called submodular if $\rho(U)+\rho(V) \geq \rho(U \cup V) + \rho(U \cap V)$ for all $U, V \subseteq E$.
It is called monotone if $\rho(U) \leq \rho(V)$ for all $U \subseteq V\subseteq E$, and normalized if $\rho(\emptyset) = 0$.
Given a submodular, monotone and normalized function $\rho$, the pair $(E,\rho)$ is called a \emph{polymatroid}. The associated \emph{polymatroid base polytope} is defined as
\[ P_{\rho} := \left\{ x \in \mathbb{\R}_{+}^{E} \mid x(U) \leq \rho(U) 
\; \forall U \subseteq E, \; x(E) = \rho(E)\right\},\]
where $x(U) := \sum_{e \in U} x_e$ for all $U\subseteq E$.
Given submodular functions $\rho_i, i \in N$ all defined on $2^E$
and $\rho:=\sum_{i\in N}\rho_i$, we know that the Minkowski sum
 $P_\rho = \sum_{i \in N} P_{\rho_i}$
is also a polymatroid base polytope, see~\cite{edmonds1970submodular}, \cite{fujishige2005submodular}, or \cite[Theorem 44.6]{Schrijver03}.

\paragraph{From Nonatomic Matroid Congestion Games to Polymatroids.}\label{par:non-polymatroid}
Consider now a nonatomic matroid congestion model $\M$, where
for every $i~\in~N$ the associated strategy space forms the base set $\B_i$ of  a matroid $M_i = (E,\I_i)$.
It is well-known that the rank function $\text{rk}_i:2^E\to \R$ of 
matroid $M_i$ satisfies
$$\text{rk}_i(S):=\max\{|U| \mid U\subseteq S \text{ and } U\in \I_i\}\quad 
\forall S\subseteq E$$
and is submodular, monotone and normalized. 
Moreover, the characteristic vectors of the bases in $\B_i$ are exactly the vertices of the polymatroid base polytope
$P_{\text{rk}_i}$.

It follows that the polytope
\[ \left\{x^i\in \R^{\B_i}_+ \;\middle\vert\; \sum_{B\in \B_i} x_B^i =d_i\right\}\]
corresponds to strategy distributions for population $i$ that lead to load vectors
in the following polytope:
\[ P_{d_i\cdot\text{rk}_i} = \left\{ x_i \in \mathbb{\R}_{+}^{E} \mid x_i(U) \leq d_i\cdot \text{rk}_i(U) \; \forall U \subseteq E, \; x_i(E) = d_i\cdot \text{rk}_i(E) \right\}.\]
Thus, the polymatroid base polytope $P:=\sum_{i\in N}P_{d_i\cdot\text{rk}_i}=P_{\sum_{i\in N} d_i \cdot \text{rk}_i}$
is equal to $\tilde{P}(\M)$.
To simplify notation we define the following submodular
functions: $\rho_i = d_i\cdot \mathrm{rk}_i$ for $i\in N$
and $\rho = \sum_{i\in N} \rho_i$.
Furthermore, let $P_i= P_{\rho_i}$.
We thus have
$\tilde{P}(\mathcal{M})= P_\rho = \sum_{i\in N} P_i$.

\paragraph{Two Sensitivity Results.}\label{par:sens}
Consider the following optimization problem 
\begin{align}\label{potential} \min_{x\in P_\rho}\left\{\Phi(x):=\sum_{e\in E} \int\limits_{0}^{x_e} c_e(t)\;dt\right\},
\end{align}
where $P_\rho$ is a polymatroid base polytope with rank function $\rho$ 
and for all $e\in E$, $c_e:\R_{\geq 0}\rightarrow\R_{\geq 0},$ are 
non-decreasing and continuous functions. 
We recall the following necessary and sufficient optimality
conditions.
Let $\chi_e\in\R_{\geq 0}^{E}$ for $e\in E$ 
be the indicator vector with all-zero
entries except for the $e$-th coordinate which is $1$.
From Fujishige~\cite{fujishige2005submodular} we know that
a base $x\in P_\rho$ is optimal for problem~\eqref{potential} if and only if 
\begin{align}\label{thm:fuj}
c_e(x_e)\leq c_{f}(x_{f}) \text{ for any 
$e,f\in E$ such that  
$x':=x+\epsilon(\chi_f-\chi_e)\in P_\rho$
for some $\epsilon>0$}.
\end{align}

We now prove a result
on the sensitivity of optimal solutions minimizing the Beckmann potential over a polymatroid base polytope: we will show that
whenever a cost function is shifted downwards (i.e., $\bar c_e(t)\leq c_e(t)$ for all $t\geq 0$),
then, any new optimal solution $\bar x$ has the property $\bar c_e(\bar x_e)\leq c_e(x_e)$ for all $e\in E$,
where $x$ denotes any optimal solution for the cost functions $c_e, e\in E$, and
$\bar{x}$ denotes any optimal solution for the cost functions $\bar{c}_e, e\in E$.
 This result implies
that for matroid set systems the weak Braess paradox does not occur if only cost reductions are considered.
\begin{lemma}\label{lem:cost-sen}
Let $I$ and $\bar I$ be two instances of problem~\eqref{potential} with the only difference that
for $\bar I$ we use cost functions satisfying $\bar c_e(t)\leq c_e(t)$ for all $e\in E$ and $t\geq 0$.
Then, any optimal solutions $x$ and $\bar x$
to the instances $I$ and $\bar I$, respectively, satisfy
\[ \bar c_e(\bar x_e)\leq c_e(x_e), \text{ for all }e\in E.\]
\end{lemma}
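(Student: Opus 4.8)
The plan is to argue by a continuity/perturbation argument combined with the exchange optimality condition~\eqref{thm:fuj}. First I would reduce to the case where all cost functions are strictly increasing: replace $c_e(t)$ by $c_e(t)+\delta t$ and $\bar c_e(t)$ by $\bar c_e(t)+\delta t$ for small $\delta>0$. These perturbed costs still satisfy $\bar c_e+\delta t\le c_e+\delta t$, they are strictly increasing and continuous, and by Remark~\ref{l.potential} (applied to the polymatroid base polytope, which is $\tilde P(\M)$ for a suitable matroid model, or directly via strict convexity of $\Phi$) the optimal load vectors $x^\delta,\bar x^\delta$ are now \emph{unique}. If the lemma holds for every $\delta>0$, then letting $\delta\to 0$ and using that optimal solution sets vary upper-semicontinuously (the minimizers $x^\delta$ of the strictly convex $\Phi^\delta$ over the fixed compact polytope converge along a subsequence to some optimal $x$ for the original instance, and likewise for $\bar x^\delta$), together with continuity of each $c_e$ and $\bar c_e$, yields $\bar c_e(\bar x_e)\le c_e(x_e)$ for the limit points; since the original optimal cost vector $(c_e(x_e))_e$ is itself unique by Remark~\ref{l.potential}, this gives the claim for \emph{all} optimizers. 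So it suffices to treat strictly increasing costs with unique optimal bases $x$ and $\bar x$.

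Now fix strictly increasing costs and the unique optima $x,\bar x\in P_\rho$. Suppose for contradiction that the set $F:=\{e\in E:\bar c_e(\bar x_e)>c_e(x_e)\}$ is nonempty. Pick $f\in F$ maximizing $\bar c_f(\bar x_f)$ among $e\in F$, and set $\theta:=\bar c_f(\bar x_f)$. The key observation is to compare the two bases on the set $U_{<\theta}:=\{e\in E: \bar c_e(\bar x_e)<\theta\}$ (ties handled below). On $F$ we have $c_e(x_e)<\bar c_e(\bar x_e)\le\theta$ by choice of $f$, and $\bar c_e\le c_e$ together with strict monotonicity gives $\bar x_e< x_e$ for every $e\in F$; in particular $x_f>\bar x_f$. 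I want to produce from this a feasible exchange $x+\epsilon(\chi_g-\chi_f)\in P_\rho$ into some coordinate $g$ with $c_g(x_g)>c_f(x_f)$, contradicting~\eqref{thm:fuj} for $x$; symmetrically an exchange out of a high-cost coordinate of $\bar x$ will contradict~\eqref{thm:fuj} for $\bar x$. The clean way to set this up is via the standard exchange capacity / tight-set structure of polymatroid base polytopes: let $A=\{e: \bar c_e(\bar x_e)\ge\theta\}$ and $B=\{e:\bar c_e(\bar x_e)<\theta\}$. Optimality~\eqref{thm:fuj} of $\bar x$ forces $B$ to be $\rho$-tight, i.e.\ $\bar x(B)=\rho(B)$ (otherwise one could move mass from a coordinate of $A$ with cost $\ge\theta$ into $B$, strictly decreasing $\Phi^\delta$-style, contradicting the exchange condition since $f\in A$ has cost $\theta$ while some coordinate of $B$ has strictly smaller cost). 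Symmetrically, since $c_e(x_e)<\theta$ for all $e\in F\subseteq B$ while $f\in B$ has $x_f>0$, consider the set $C=\{e: c_e(x_e)<\theta\}\supseteq F$; optimality of $x$ forces $C$ to be $\rho$-tight as well (any coordinate outside $C$ has $c$-value $\ge\theta>c_f(x_f)$ only if $f\notin C$ — here one must be slightly careful because $f$ itself may or may not lie in $C$, which is exactly where strict monotonicity $\bar x_f<x_f$ and the choice of $f$ get used). Then $x(C)=\rho(C)$, and since $x_e\ge\bar x_e$ on $F$ with strict inequality, and one shows $x_e\ge\bar x_e$ on enough of $C$, one derives $x(C)>\bar x(C)$, contradicting $\bar x(C)\le\rho(C)=x(C)$.

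The main obstacle I anticipate is making the tight-set bookkeeping precise: the exchange optimality condition~\eqref{thm:fuj} only talks about \emph{pairs} of coordinates, so turning it into the statement ``every down-set in the cost order is $\rho$-tight'' requires the standard but slightly delicate argument that the exchange graph of a non-tight set contains an augmenting move, and one must track ties in the cost values $\bar c_e(\bar x_e)$ and $c_e(x_e)$ carefully (perturbing by distinct infinitesimals, or choosing $f$ to break ties consistently). The second delicate point is the interaction between the \emph{two} orderings (the one induced by $\bar c(\bar x)$ and the one induced by $c(x)$) on the same polytope $P_\rho$; the bridge is the pointwise domination $\bar c_e\le c_e$, which via strict monotonicity transfers $\bar c_e(\bar x_e)>c_e(x_e)\Rightarrow \bar x_e<x_e$ and lets a $\rho$-tight set for one ordering constrain the other base. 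Once the correct tight set $C$ is identified and shown to satisfy $x(C)\ge\bar x(C)$ with strict inequality somewhere, the contradiction with $x(C)=\rho(C)\ge\bar x(C)$ is immediate.
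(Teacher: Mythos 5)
There is a genuine gap here, and also a concrete error. The error first: for $e$ with $\bar c_e(\bar x_e) > c_e(x_e)$ you conclude $\bar x_e < x_e$, but the inequality goes the other way. From $\bar c_e(x_e) \le c_e(x_e) < \bar c_e(\bar x_e)$ and monotonicity of $\bar c_e$ one gets $\bar x_e > x_e$. Since your level-set construction (the choice of $f$, the claimed tightness of $B$ and $C$, and the final comparison $x(C)>\bar x(C)$ versus $\bar x(C)\le \rho(C)=x(C)$) is built on the reversed inequality, it does not go through as written. More importantly, the core of your argument --- upgrading the pairwise condition~\eqref{thm:fuj} to ``every down-level-set of the cost order is $\rho$-tight,'' and then playing the two orderings induced by $(c_e(x_e))_{e}$ and $(\bar c_e(\bar x_e))_{e}$ against each other on a common tight set --- is exactly the part you yourself flag as ``the main obstacle'' and never carry out. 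As it stands the proposal is a plan rather than a proof.

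The missing idea that makes the lemma short is to apply an exchange property directly to the \emph{pair of bases} $x,\bar x$ rather than to level sets of a single base: if $\bar x_e > x_e$ for two points of $P_\rho$, then there exist $f\in E\setminus\{e\}$ with $\bar x_f < x_f$ and $\epsilon>0$ such that \emph{simultaneously} $x+\epsilon(\chi_e-\chi_f)\in P_\rho$ and $\bar x+\epsilon(\chi_f-\chi_e)\in P_\rho$ (Murota, \emph{Discrete Convex Analysis}, Theorem~4.3). Applying~\eqref{thm:fuj} to each base then yields $c_f(x_f)\le c_e(x_e)$ and $\bar c_e(\bar x_e)\le \bar c_f(\bar x_f)$, and the chain
$c_f(x_f)\le c_e(x_e) < \bar c_e(\bar x_e) \le \bar c_f(\bar x_f) \le c_f(\bar x_f) \le c_f(x_f)$
(the last two steps using $\bar c_f\le c_f$, $\bar x_f<x_f$ and monotonicity) is an immediate contradiction. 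Note that this works for merely nondecreasing costs, so your preliminary perturbation to strictly increasing costs and the subsequent limiting argument --- which is itself sound, resting on the uniqueness of the equilibrium cost vector as in Remark~\ref{l.potential} --- are unnecessary detours.
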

\begin{proof}
Assume by contradiction that there is $e\in E$ with $\bar c_e(\bar x_e)>c_e(x_e)$.
Thus, by the monotonicity of $c_e$ this implies $\bar x_e>x_e$. By elementary transformations of bases in polymatroid base polytopes 
(see Murota~\cite[Theorem 4.3]{murota2003discrete}), there must
exist $f\in E\setminus\{e\}$ with $\bar x_f<x_f$ and $\epsilon>0$ such that
\[ x+\epsilon(\chi_e-\chi_f)\in P_\rho \text{\ \  and\ \ }
\bar x+\epsilon(\chi_f-\chi_e)\in P_\rho.\]
As $x$ and $\bar x$ are both optimal solutions for their
respective optimization problems,
and $x_f, \bar{x}_e >0$, we obtain by the optimality conditions~\eqref{thm:fuj} that
$c_f(x_f) \leq c_e(x_e)$ and
$\bar{c}_e(\bar{x}_e)\leq \bar{c}_f(\bar{x}_f)$.
Hence,
\[ c_f(x_f) 
\leq   c_e(x_e)< \bar c_e(\bar x_e)
\leq\bar c_f(\bar x_f)\leq c_f(x_f),\]
a contradiction.
\end{proof}

We now prove a second sensitivity result for minimizers of the Beckman function
over the polymatroid base polytope for the case 
when the demand is reduced.
 \begin{lemma}\label{lem:dChangePolymat}
Let $I$ and $\bar{I}$ be two instances of problem~\eqref{potential},
with the only difference that for $\bar{I}$ the demand for one
population  
$j\in N$ is decreased from $d_j$ to $\bar{d_j}< d_j$.
Hence, the feasible strategy distributions
$P$ and $\bar{P}$ of
$I$ and $\bar{I}$, respectively,
are given by
\begin{align*}
P &= P_\rho = \sum_{i\in N} P_i, \text{ and}\\
\bar{P} &= \frac{\bar{d_j}}{d_j}P_j +
 \sum_{i\in N\setminus \{j\}} P_i.
\end{align*}
Let $x, \bar{x}\in \mathbb{R}^E_+$ be
minimizers of problem~\eqref{potential} over the
polytopes $P$ and $\bar{P}$, respectively. Then
\begin{equation*}
c_e(\bar{x}_e) \leq c_e(x_{e}) \qquad \forall e\in E.
\end{equation*}
\end{lemma}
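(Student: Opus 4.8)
The plan is to reduce the demand-reduction statement to the cost-reduction statement (Lemma~\ref{lem:cost-sen}) by a clever choice of modified cost functions on an auxiliary ground set, but in fact a cleaner route is available that works directly on $P_\rho$ itself. First I would observe that $\bar P = \frac{\bar d_j}{d_j}P_j + \sum_{i\neq j} P_i$ is again a polymatroid base polytope, namely $P_{\bar\rho}$ with $\bar\rho = \frac{\bar d_j}{d_j}\rho_j + \sum_{i\neq j}\rho_i$, and crucially $\bar\rho \le \rho$ pointwise (since $\rho_j \ge 0$ and $\bar d_j < d_j$), while $\bar\rho(E) = \rho(E) - (1 - \frac{\bar d_j}{d_j})\rho_j(E) < \rho(E)$. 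So $\bar P$ is a polymatroid base polytope sitting ``below'' $P_\rho$ in the sense that $\bar\rho \le \rho$. The key structural fact I would invoke is the exchange property between bases of a polymatroid and bases of a sub-polymatroid: given $x \in P_\rho$ and $\bar x \in P_{\bar\rho}$ with $\bar\rho \le \rho$, since $\bar x(E) \le x(E)$ there is some coordinate $e$ with $\bar x_e < x_e$; and I claim that whenever $\bar x_e < x_e$ one can find $f$ with $\bar x_f > x_f$ (or $\bar x_f \ge x_f$ in the degenerate case, but strictness can be arranged by a counting argument since $x(E) > \bar x(E)$ is not needed — rather, for the coordinate of maximal violation) together with $\epsilon > 0$ such that $x - \epsilon(\chi_e - \chi_f) \in P_\rho$ and $\bar x + \epsilon(\chi_e - \chi_f) \in P_{\bar\rho}$. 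The first inclusion uses the standard basis-exchange in $P_\rho$; the second uses that $\bar\rho \le \rho$ guarantees the increment of $\bar x$ on $e$ stays feasible.

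With this exchange in hand the argument mirrors the proof of Lemma~\ref{lem:cost-sen} almost verbatim. Suppose for contradiction that $c_e(\bar x_e) > c_e(x_e)$ for some $e$; by monotonicity of $c_e$ this forces $\bar x_e > x_e$, i.e. $\bar x_e > x_e \ge 0$. Since $x(E) = \rho(E) > \bar\rho(E) = \bar x(E)$, there is some $g$ with $x_g > \bar x_g \ge 0$, and more to the point, applying the exchange property to the pair (coordinate where $\bar x$ exceeds $x$, coordinate where $x$ exceeds $\bar x$) I obtain $f \in E \setminus \{e\}$ with $x_f > \bar x_f$, $x_f > 0$, and $\epsilon > 0$ with
\[
x + \epsilon(\chi_e - \chi_f) \in P_\rho \quad\text{and}\quad \bar x + \epsilon(\chi_f - \chi_e) \in P_{\bar\rho}.
\]
The optimality conditions~\eqref{thm:fuj} applied to $x$ on $P_\rho$ (using $x_f > 0$) give $c_f(x_f) \le c_e(x_e)$, and applied to $\bar x$ on $P_{\bar\rho}$ (using $\bar x_e > 0$) give $c_e(\bar x_e) \le c_f(\bar x_f)$. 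Combining with monotonicity of $c_f$ (since $x_f > \bar x_f$ implies $c_f(\bar x_f) \le c_f(x_f)$):
\[
c_f(x_f) \le c_e(x_e) < c_e(\bar x_e) \le c_f(\bar x_f) \le c_f(x_f),
\]
a contradiction.

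The main obstacle I anticipate is establishing the two-sided exchange lemma: that for $\bar\rho \le \rho$ with $\bar\rho(E) \le \rho(E)$, $x \in P_\rho$, $\bar x \in P_{\bar\rho}$, and any $e$ with $\bar x_e > x_e$, one can simultaneously push mass \emph{out} of coordinate $e$ in $x$ (staying in $P_\rho$) and \emph{into} coordinate $e$ in $\bar x$ (staying in $P_{\bar\rho}$) along the same pair $\{e,f\}$. The ``into $\bar x$'' direction is the delicate one: increasing $\bar x_e$ requires that $e$ not be tight for $\bar\rho$ at every set, and one must choose $f$ so that decreasing $\bar x_f$ simultaneously relaxes all the sets that block the increase at $e$ — this is exactly the exchange structure of the base polytope $P_{\bar\rho}$, and one should be able to quote it from Murota~\cite[Theorem 4.3]{murota2003discrete} or the submodular-exchange machinery in Fujishige~\cite{fujishige2005submodular} as was done in Lemma~\ref{lem:cost-sen}. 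A clean alternative, which sidesteps finding a single shared pair, is to route through an intermediate point: decompose the transition from $\bar x$ to $x$ into a sequence of single-pair exchanges each staying in $P_\rho$ (possible since $P_{\bar\rho} \subseteq P_\rho$ as $\bar\rho \le \rho$ and — after checking — $\bar x(E) \le \rho(E)$ lets us first lift $\bar x$ to a base of $P_\rho$), but I expect the direct two-sided exchange to be the intended and shortest path, so that is the step I would write out carefully.
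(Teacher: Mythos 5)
Your high-level plan --- derive a two-sided exchange pair $\{e,f\}$ and close the same four-term chain of inequalities as in Lemma~\ref{lem:cost-sen} --- matches the paper's, and the chain itself is correct. The genuine gap is exactly the step you defer: the ``two-sided exchange lemma'' between $x\in P_\rho$ and $\bar x\in P_{\bar\rho}$ under the sole hypothesis $\bar\rho\le\rho$. No such result can be quoted from Murota or Fujishige because it is false, and in fact the conclusion of the present lemma itself fails for general $\bar\rho\le\rho$: the smallest attainable $e$-coordinate over a base polytope $P_{\sigma}$ equals $\sigma(E)-\sigma(E\setminus\{e\})$, and $\bar\rho\le\rho$ does not force $\bar\rho(E)-\bar\rho(E\setminus\{e\})\le\rho(E)-\rho(E\setminus\{e\})$ (e.g.\ on $E=\{e,f,g\}$ one can take $\rho(E)=3$, $\rho(\{f,g\})=2.5$ and $\bar\rho(E)=2$, $\bar\rho(\{f,g\})=0.8$ within valid normalized, monotone, submodular functions satisfying $\bar\rho\le\rho$). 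Choosing $c_e$ steep and $c_f,c_g$ flat then yields optimizers with $c_e(\bar x_e)>c_e(x_e)$; in particular no exchange decreasing $\bar x_e$ inside $P_{\bar\rho}$ while increasing $x_e$ inside $P_\rho$ can exist there. Your fallback (``lift $\bar x$ to a base of $P_\rho$'') also does not work as stated: $P_{\bar\rho}\not\subseteq P_\rho$, since points of $P_{\bar\rho}$ have total mass $\bar\rho(E)<\rho(E)$, and exchanges performed at a lifted point of $P_\rho$ say nothing about the optimality conditions \eqref{thm:fuj} for $\bar x$ over $\bar P$, which is what your chain needs.

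What makes the lemma true is the Minkowski-sum structure $P=\bar P+\bigl(1-\tfrac{\bar d_j}{d_j}\bigr)P_j$, which you write down and then discard in favor of the weaker $\bar\rho\le\rho$. The paper's proof uses it as follows: decompose $x=\sum_{i\in N}x_i$ with $x_i\in P_i$ and set $x'=\sum_{i\ne j}x_i+\tfrac{\bar d_j}{d_j}x_j\in\bar P$, so that $x'\le x$ and $\bar x_e>x_e\ge x'_e$. Now $x'$ and $\bar x$ are two points of the \emph{same} base polytope $\bar P$, so the standard same-polytope exchange applies and yields $f$ with $\bar x_f<x'_f$ and $\epsilon>0$ such that $\bar x+\epsilon(\chi_f-\chi_e)\in\bar P$ and $x'+\epsilon(\chi_e-\chi_f)\in\bar P$. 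The second move is lifted from $x'$ back to $x$ by adding $x-x'\in\bigl(1-\tfrac{\bar d_j}{d_j}\bigr)P_j$, giving $x+\epsilon(\chi_e-\chi_f)\in P$; your chain then closes with the extra links $c_f(\bar x_f)\le c_f(x'_f)\le c_f(x_f)$. If you replace your cross-polymatroid exchange by this projection-and-lift argument, the rest of your write-up goes through essentially verbatim.
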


\begin{proof}
By contradiction assume there is $e\in E$ with $c_e(\bar{x}_{e}) > c_e(x_{e})$. Since $c_e$ is
nondecreasing, this implies $\bar{x}_{e} > x_{e}$.
Because $x\in P=\sum_{i\in N}P_i$, we can decompose
$x$ as $x=\sum_{i\in N} x_i$ where $x_i \in P_i$
for $i\in N$.
Let $x'=\sum_{i\in N\setminus \{j\}} x_i +\frac{\bar{d}_j}{d_j}x_j\in \bar{P}$.
Clearly, $x'\leq x$ component-wise, and we thus have
in particular $\bar{x}_{e} > x_{e} \geq x'_{e}$.
By exchange properties of base polytopes of polymatroids
there exist 
$f\in E$ with
\begin{equation}\label{eq:xbLeqXp}
\bar{x}_{f} < x'_{f} 
\quad\text{and}
\end{equation}
$\epsilon >0$ such that
\begin{enumerate}[(i)]
\setlength\itemsep{0em}
\item\label{item:xbChange}
 $\bar{x} + \epsilon (\chi_f - \chi_e) \in \bar{P}$, and
\item\label{item:xpChange}
 $x'      + \epsilon (\chi_e - \chi_f) \in \bar{P}$.
\end{enumerate}
Notice that $\bar{P} + (1-\frac{\bar{d_j}}{d_j})P_j = P$,
and hence, $x-x' \in (1-\frac{\bar{d_j}}{d_j})P_j$.
Together with~\eqref{item:xpChange} this implies
\begin{equation*}
x+\epsilon(\chi_e - \chi_f) =
\underbrace{x'+\epsilon(\chi_e - \chi_f)}_{\in \bar{P}}
 + \underbrace{(x-x')}_{\in \left(1-\frac{\bar{d_j}}{d_j}\right)P_j}\in P,
\end{equation*}
and therefore
\begin{equation}\label{eq:xEqu}
c_e(x_{e}) \geq c_f(x_{f}),
\end{equation}
since $x$ is a minimizer of $\Psi$ over $P$.
Similarly,~\eqref{item:xbChange} implies
\begin{equation}\label{eq:xbEqu}
c_e(\bar{x}_{e}) \leq c_f(\bar{x}_{f}),
\end{equation}
because $\bar{x}$ is a minimizer of $\Psi$
over $\bar{P}$.
Putting things together, we obtain
\begin{align*}
c_e(x_{e}) &< c_e(\bar{x}_{e}) && \text{(by assumption for sake
  of contradiction)} \\
   &\leq c_f(\bar{x}_{f})
       && \text{\eqref{eq:xbEqu}} \\
   &\leq c_f(x'_{f})
       && \text{(\eqref{eq:xbLeqXp} and $c_f$ is nondecreasing)}\\
   &\leq c_f(x_{f})
       && \text{($x'\leq x$ and $c_f$ is nondecreasing)}\\
   &\leq c_e(x_{e})
       && \text{\eqref{eq:xEqu}},
\end{align*}
thus leading to a contradiction and finishing the proof.
\end{proof}
Furthermore, 
Lemma~\ref{lem:cost-sen} shows that equilibrium costs only decrease (component-wise) when reducing the cost function.
Hence, reducing costs and demands simultaneously
can be interpreted as doing
first one reduction and then the other. Each of these changes
is such that any new Wardrop equilibrium has on any resource a lower
cost than before the change. Thus, 
a family of 
matroid set systems (as a special case of polymatroids) is 
immune to the weak Braess paradox.

\paragraph{From Clutters to Set Systems.}
To complete the direction (I) $\Rightarrow$ (II)
of the proof of Theorem~\ref{thm:braess}, it remains
to show that if the clutters $(E,(\S_i)^{\min})_{i\in N}$ of some
set systems $(E,\S_i)_{i\in N}$ are immune to the weak Braess paradox
then so are the original set systems $(E,\S_i)_{i\in N}$.
%

For any set $U\in \S_i$, we call $U^*\in S_i$ a
\emph{tight subset} of $U$, if $U^*$ is an inclusion-wise
minimal set in $\S_i$ contained in $U$, i.e,
$U^*\subseteq U$ and $U^* \in (\S_i)^{\min}$.

\begin{lemma}\label{lem:clutter}
If $(E,(\S_i)^{\min})_{i\in N}$ is immune to the weak Braess paradox
then so is $(E,\S_i)_{i\in N}$.
\end{lemma}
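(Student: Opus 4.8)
The plan is to show that any weak Braess paradox instance for the family $(E,\S_i)_{i\in N}$ can be transformed into a weak Braess paradox instance for the family of clutters $(E,(\S_i)^{\min})_{i\in N}$, which contradicts the assumed immunity of the clutters. The key observation is that for the purpose of Wardrop equilibria, $\S_i$ and $(\S_i)^{\min}$ induce the same load polytope $\tilde{P}_i$, because adding a non-minimal set $U\supsetneq U^*$ to the strategy space never helps a player under nondecreasing costs: routing along $U^*$ is never more expensive than routing along $U$. More precisely, I would first record the following fact: if $x$ is a Wardrop equilibrium for the model $\M$ with strategy spaces $(\S_i)_{i\in N}$ and cost functions $(c_e)_{e\in E}$, then there is a Wardrop equilibrium $x^{\min}$ for the model $\M^{\min}$ with strategy spaces $(\S_i^{\min})_{i\in N}$ and the \emph{same} cost functions and demands such that $x^{\min}_e \le x_e$ for all $e\in E$, and hence $c_e(x^{\min}_e)\le c_e(x_e)$ for all $e$.

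To prove this fact, I would take the given equilibrium $x = (x^i_S)_{(i,S)\in\S}$ and reroute: for each population $i$ and each used set $S\in\S_i$ with $x^i_S>0$, replace the flow $x^i_S$ on $S$ by the same amount of flow on a tight subset $S^*\in(\S_i)^{\min}$ of $S$ (choosing one such $S^*$ arbitrarily). Call the resulting distribution $x^{\min}$; it is feasible for $\M^{\min}$ since each $S^*\in\S_i^{\min}$ and the demand of each population is preserved. Since $S^*\subseteq S$ and each resource carries load at most one per strategy, we get $x^{\min}_e \le x_e$ componentwise. It remains to argue $x^{\min}$ is a Wardrop equilibrium for $\M^{\min}$. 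Here I would invoke the Beckmann potential characterization (Theorem~\ref{thm:beck}): it suffices to show $x^{\min}$ minimizes $\Phi$ over $\tilde P(\M^{\min}) = \sum_{i\in N}\tilde P_i(\M^{\min})$. But $\tilde P_i(\M^{\min})\subseteq \tilde P_i(\M)$ (every minimal set is a set), so $\tilde P(\M^{\min})\subseteq \tilde P(\M)$; and conversely, for any point of $\tilde P_i(\M)$ one can reroute each $S$ to a tight subset $S^*$ as above and only decrease each coordinate, so every point of $\tilde P(\M)$ dominates (componentwise) some point of $\tilde P(\M^{\min})$. Since $\Phi$ is nondecreasing in each coordinate (the integrands $c_e$ are nonnegative), the minimum of $\Phi$ over the two polytopes is the same, and moreover the load vector of $x^{\min}$, being dominated by that of the optimal $x$, is itself optimal over $\tilde P(\M^{\min})$; hence $x^{\min}$ is a Wardrop equilibrium for $\M^{\min}$.

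With this fact in hand the lemma follows quickly. Suppose $(E,\S_i)_{i\in N}$ admits the weak Braess paradox, witnessed by models $\M,\bar\M$ with $\bar c_e\le c_e$, $\bar d_i\le d_i$, equilibria $x,\bar x$, and a resource $e$ with $c_e(x_e) < \bar c_e(\bar x_e)$. Apply the fact to $\M$ to obtain an equilibrium $x^{\min}$ of $\M^{\min}$ with $c_f(x^{\min}_f)\le c_f(x_f)$ for all $f$; in particular $c_e(x^{\min}_e)\le c_e(x_e)$. For $\bar\M$ we need to go the other way: we want an equilibrium of $\bar\M^{\min}$ whose cost on $e$ is at least $\bar c_e(\bar x_e)$. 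But by Remark~\ref{l.potential}, the cost vector $(\bar c_f(\cdot))$ at a Wardrop equilibrium depends only on the load vector, which is the minimizer of $\Phi$ over $\tilde P(\bar\M^{\min}) = \tilde P(\bar\M)$ (equality of these polytopes was shown above); hence \emph{any} equilibrium $\bar x^{\min}$ of $\bar\M^{\min}$ satisfies $\bar c_f(\bar x^{\min}_f) = \bar c_f(\bar x_f)$ for all $f$, in particular for $f=e$. Therefore $c_e(x^{\min}_e)\le c_e(x_e) < \bar c_e(\bar x_e) = \bar c_e(\bar x^{\min}_e)$, which exhibits the weak Braess paradox for the family of clutters $(E,(\S_i)^{\min})_{i\in N}$, contradicting its assumed immunity.

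I expect the main obstacle to be the careful bookkeeping in the rerouting argument, specifically verifying that $\tilde P(\M)$ and $\tilde P(\M^{\min})$ have the same $\Phi$-minimum and that the dominated load vector $x^{\min}$ inherits optimality — one must be slightly careful because $\tilde P(\M^{\min})$ is a strict subset of $\tilde P(\M)$ in general, so optimality does not transfer by mere restriction but rather through the monotonicity of $\Phi$ together with the domination property. A secondary subtlety is handling the ``$>0$'' conditions in the definition of the weak Braess paradox cleanly, but since the weak form only asks for a resource with increased cost (not a player with increased private cost), this is not actually needed here; the $>0$ bookkeeping would matter only for the strong form.
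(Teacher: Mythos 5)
Your proof is correct and follows essentially the same route as the paper: reroute each used non-minimal set onto a tight subset, observe that loads only drop so the rerouted point still minimizes the Beckmann potential over both $\tilde P(\M)$ and $\tilde P(\M^{\min})$, and then invoke the uniqueness of equilibrium costs (Remark~\ref{l.potential}) to turn the resulting inequalities into equalities. One small inaccuracy: the parenthetical claim that $\tilde P(\bar\M^{\min})=\tilde P(\bar\M)$ ``was shown above'' is not what you showed and is false in general (only $\tilde P(\bar\M^{\min})\subseteq\tilde P(\bar\M)$ holds, together with equality of the $\Phi$-minima); the conclusion you want still follows, since equal minima plus containment imply every equilibrium of $\bar\M^{\min}$ is also an equilibrium of $\bar\M$, whereupon Remark~\ref{l.potential} applied to $\bar\M$ gives $\bar c_f(\bar x^{\min}_f)=\bar c_f(\bar x_f)$ for all $f$.
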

\begin{proof}
Let $\M=(N,E,(\S_i)_{i\in N}, (c_e)_{e\in E}, (d_i)_{i\in N})$
and
$\bar{\M}=(N,E,(\S_i)_{i\in N}, (\bar{c}_e)_{e\in E}, (\bar{d}_i)_{i\in N})$
be two congestion models defined on the same strategy spaces 
$(\S_i)_{i\in N}$ and satisfying $\bar{c}_e(t) \leq c_e(t)$ for
$e\in E, t\geq 0$ and $\bar{d}_i\leq d_i$ for $i\in N$.
Moreover, let 
$\M^{\min}$ 
and $\bar \M^{\min}$  
be the models obtained from $\M$ and $\bar{\M}$, respectively,
by replacing $(\S_i)_{i\in N}$ by $((\S_i)^{\min})_{i\in N}$.

Let $x\in P(\M)$ and $\bar x\in P(\bar \M)$ be two Wardrop equilibria.
We have to show $c_e(x_e) \geq \bar{c}_e(\bar{x}_e)$ for $e\in E$.
Starting from $x$, 
we iteratively pick $i\in N, U\in \S_i\setminus {(\S_i)^{\min}}$ with $x^i_{U}>0$ and  a tight subset $U^*$ of $U$, and 
set 
\begin{itemize}
\item $x^i_{U^*}\leftarrow x^i_{U^*}+x^i_{U},\quad x^i_U\leftarrow 0$.
\end{itemize}
We denote by $x'$ the thus obtained point, whose load vector
satisfies $(x'_e)_{e\in E}\in \tilde P(\M^{\min})\subseteq \tilde P(\M)$.
We apply the same procedure to $\bar x$ to obtain ${\bar x}'$ 
with $({\bar x}'_e)_{e\in E}\in \tilde P(\bar \M^{\min}))\subseteq \tilde P(\bar \M)$.
The condition $U^*\subseteq U$ implies that the profiles  $x'$ and $\bar x'$ satisfy
\[ c_e(x'_e)\leq c_e(x_e) \text{ and } \bar c_e(\bar x'_e)\leq \bar c_e(\bar x_e) \text{ for all }e\in E.\]
This implies that $x'$ and $\bar x'$ are global minimizers of the Beckmann potential
over both polytopes $\tilde P(\M)$ and $\tilde P(\M^{\min})$, and, $\tilde P(\bar\M)$ and $\tilde P(\bar\M^{\min})$, respectively.
 Thus, $x'$ and $\bar x'$ are  
equilibrium load vectors for both models $\M$ and $\M^{\min}$, and, $\bar \M$ and $\bar\M^{\min}$, 
respectively.
Since equilibrium costs per resource are unique (see Remark~\ref{l.potential}), we get
\[ c_e(x'_e)= c_e(x_e) \text{ and } \bar c_e(\bar x'_e)= \bar c_e(\bar x_e) \text{ for all }e\in E.\]
As by assumption $(E,(\S_i)^{\min})_{i\in N}$ is immune to
the weak Braess paradox we have $c_e(x'_e) \geq \bar{c}_e(\bar{x}_e')$
for $e\in E$, which finally implies
\[ c_e(x_e)=c_e(x'_e)\geq  \bar c_e(\bar x'_e)=\bar c_e(\bar x_e) \text{ for all }e\in E,\]
proving the lemma.
\end{proof}

\subsection{Proof of Theorem~\ref{thm:braess}: (II) $\Rightarrow$ (I).}
Let us recall the definition of a set system being immune
to the weak Braess paradox.
Let $\M$ and $\bar \M$ be any two models of a nonatomic
congestion game with the only difference that
for $\bar \M$ we use cost functions satisfying $\bar c_e(t)\leq c_e(t)$ 
for all $e\in E$ and $t\geq 0$ and $\bar d_i\leq d_i$ for all $i\in N$.
Then, if the strategy spaces of $\M, \bar\M$ are immune to the
weak Braess paradox, we have that 
any two Wardrop equilibria $x$ and $\bar x$ for $\M$
and $\bar \M$, respectively, 
satisfy
\begin{align}\label{wBP-free} \bar c_e(\bar x_e)\leq c_e(x_e), \text{ for all }e\in E.\end{align}

We first give an outline of the proof. Let $(\S_i)_{i\in N}$
be strategy spaces that are immune to the weak Braess paradox.
We fix the demand of each population to $1$, i.e., $d_i=1$ for
$i\in N$. Hence, the possible load vectors that can be obtained
by population $i\in N$ by playing a feasible strategy are given
by the following polytope
\begin{equation*}
P_i = \text{Convex hull of }\{\chi_S \mid S\in \S_i\}
\qquad \forall i\in N,
\end{equation*}
where $\chi_S$ denotes the characteristic vector of $S\subseteq E$.
Thus, the load vectors obtainable by considering all populations
combined are described by the Minkowski sums of the polytopes
$P_i$, i.e.,
\begin{equation*}
\tilde{P}(\M) = \sum_{i\in N} P_i.
\end{equation*}
For brevity, we set
\begin{equation*}
P = \tilde{P}(\M).
\end{equation*}
Analogously, we define the polytopes corresponding to the possible
load vectors stemming from the strategy spaces
$((\S_i)^{\min})_{i\in N}$, i.e.,
\begin{equation*}
P_i^{\min} = \text{Convex hull of }\{\chi_S \mid S\in (\S_i)^{\min}\}
\qquad \forall i \in N, 
\end{equation*}
and we let
\begin{equation*}
P_i^{\uparrow} = P_i + \mathbb{R}_{\geq 0}^E = P_i^{\min} + \mathbb{R}_{\geq 0}^E \qquad \forall i\in N
\end{equation*}
be the dominant of $P_i$ for $i\in N$.
Observe that
\begin{equation}\label{eq:domPRelDomPi}
P^{\uparrow} = \sum_{i\in N} P_i^{\uparrow}.
\end{equation}

We will use a polyhedral approach to show (II) $\Rightarrow$ (I).
More precisely, we will show that $(\S_i)_{i\in N}$ being immune
to the weak Braess paradox implies that each $P^{\min}_i$ for $i\in N$
is the base polytope of a matroid.
To show that $P^{\min}_i$ for $i\in N$ is the base polytope
of a matroid we rely on the following well-known characterization
(see, e.g., \cite[Theorem 17.1]{fujishige2005submodular}), and a corollary thereof.

\begin{lemma}\label{lem:charMatBasePolytope}
Let $Q\subseteq [0,1]^E$ be a $\{0,1\}$-polytope, i.e., all
vertices of $Q$ are part of $\{0,1\}^E$. Then $Q$ is the base
polytope of a matroid if and only if each edge direction
of $Q$ is parallel to some vector $\chi_{e} -\chi_{f} $ for distinct
$e,f\in E$.
\end{lemma}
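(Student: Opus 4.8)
This statement is classical---it is the edge-vector characterization of base polytopes due to Tomizawa, cf.\ \cite[Theorem 17.1]{fujishige2005submodular}---so the plan is to recall the standard argument; both implications combine the edge structure of $\{0,1\}$-polytopes with matroid exchange axioms. Throughout, write $B-e+f$ for $(B\setminus\{e\})\cup\{f\}$, and use the elementary observation that any edge of a $\{0,1\}$-polytope joins two vertices $\chi_A,\chi_B$, so its direction is $\chi_A-\chi_B=\chi_{A\setminus B}-\chi_{B\setminus A}$; this is parallel to some $\chi_e-\chi_f$ with $e\neq f$ if and only if $|A\setminus B|=|B\setminus A|=1$, i.e.\ $A$ and $B$ differ by a single coordinate swap.

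\emph{``Only if''.} Let $Q$ be the base polytope of a matroid $M$; its vertices are exactly the $\chi_B$ with $B$ a basis of $M$. By the observation above, it suffices to show that whenever $\chi_B$ and $\chi_{B'}$ span an edge of $Q$, then $|B\triangle B'|=2$. Suppose instead $|B\triangle B'|\geq 4$. Pick $e\in B\setminus B'$ and apply the symmetric basis exchange property of matroids to obtain $f\in B'\setminus B$ such that $B_1:=B-e+f$ and $B_2:=B'-f+e$ are both bases. One checks that $B,B',B_1,B_2$ are four distinct bases and that $\chi_B+\chi_{B'}=\chi_{B_1}+\chi_{B_2}$, so the midpoint of the edge $[\chi_B,\chi_{B'}]$ also lies on the segment $[\chi_{B_1},\chi_{B_2}]$. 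This is impossible, since the only vertices of $Q$ on the edge $[\chi_B,\chi_{B'}]$ are $\chi_B$ and $\chi_{B'}$. Hence every edge direction of $Q$ has the claimed form.

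\emph{``If''.} Assume every edge direction of $Q$ is parallel to some $\chi_e-\chi_f$. The graph ($1$-skeleton) of a polytope is connected, and every such direction has zero coordinate sum, so all vertices of $Q$ share a common value $k:=x(E)\in\{0,1,\dots,|E|\}$; thus the vertices of $Q$ are $\{\chi_B : B\in\mathcal{B}\}$ for a family $\mathcal{B}$ of $k$-element subsets of $E$, and $Q=\mathrm{conv}\{\chi_B : B\in\mathcal{B}\}$. It remains to check the basis exchange axiom for $\mathcal{B}$, which together with equicardinality characterizes matroid bases: given $B_1,B_2\in\mathcal{B}$ and $e\in B_1\setminus B_2$, we must find $f\in B_2\setminus B_1$ with $B_1-e+f\in\mathcal{B}$. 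Let $Q'$ be the face of $Q$ cut out by the supporting hyperplanes $\{x_g=1\}$ for $g\in B_1\cap B_2$ and $\{x_g=0\}$ for $g\in E\setminus(B_1\cup B_2)$. Then $\chi_{B_1},\chi_{B_2}\in Q'$, the vertices of $Q'$ are exactly the $\chi_B$ with $B\in\mathcal{B}$ and $B_1\cap B_2\subseteq B\subseteq B_1\cup B_2$, and---being a face---$Q'$ also has all edge directions parallel to some $\chi_e-\chi_f$. Minimizing $x\mapsto x_e$ over $Q'$, the vertex $\chi_{B_1}$ (value $1$) is not optimal because $\chi_{B_2}$ has value $0$; hence some edge of $Q'$ incident to $\chi_{B_1}$ strictly decreases $x_e$. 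Its other endpoint is a $\{0,1\}$-vertex reached in a direction $\chi_f-\chi_g$, which forces $g=e$, so the endpoint is $\chi_{B_1-e+f}$ for some $f\notin B_1$; the constraint $B_1-e+f\subseteq B_1\cup B_2$ then forces $f\in B_2\setminus B_1$. Since $\chi_{B_1-e+f}$ is a vertex of $Q'\subseteq Q$, we have $B_1-e+f\in\mathcal{B}$, which is the required exchange. Thus $\mathcal{B}$ is the set of bases of a matroid and $Q$ is its base polytope.

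The ``only if'' direction is routine once the symmetric basis exchange property is invoked. The main work is the ``if'' direction: one must pick the right face $Q'$ and the right linear objective so that an improving edge at $\chi_{B_1}$ yields an exchange element inside $B_2$, using three standard polyhedral facts---that faces inherit the edge-direction property, that a non-optimal vertex has an incident edge strictly improving any given linear objective, and that an edge between two $\{0,1\}$-vertices with direction $\chi_f-\chi_g$ is a single coordinate swap.
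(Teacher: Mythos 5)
Your proof is correct. The paper itself gives no proof of this lemma---it is cited as a known result (Tomizawa's edge-vector characterization, \cite[Theorem~17.1]{fujishige2005submodular})---so there is nothing to compare against; your argument is a sound, self-contained version of the standard one, with the only external ingredients being the symmetric basis exchange theorem for the ``only if'' direction and the usual polyhedral facts (faces inherit edges, connectivity of the $1$-skeleton, and the existence of an improving edge at a non-optimal vertex) for the ``if'' direction.
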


The above lemma can be rephrased in terms of the dominant $Q^{\uparrow}$ of $Q$ as follows.
\begin{corollary}\label{cor:charDomMatBasePolytope}
Let $Q\subseteq [0,1]^E$ be a $\{0,1\}$-polytope such that no two distinct vertices $u,v$ of $Q$ satisfy $u\leq v$ (component-wise). Then $Q$ is the base polytope of a matroid if and only if each edge direction of $Q^{\uparrow}$ is parallel to some vector $\chi_e - \chi_f$ for distinct $e,f\in E$.
\end{corollary}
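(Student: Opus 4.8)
The plan is to deduce the corollary directly from Lemma~\ref{lem:charMatBasePolytope} by carefully relating the edge directions of the antichain $\{0,1\}$-polytope $Q$ to those of its dominant $Q^{\uparrow} = Q + \mathbb{R}_{\geq 0}^E$. The hypothesis that no two vertices $u \leq v$ are comparable is exactly what guarantees that every vertex of $Q$ survives as a vertex of $Q^{\uparrow}$: a point $v \in Q$ fails to be a vertex of $Q^{\uparrow}$ precisely when it is dominated by another point of $Q^{\uparrow}$, equivalently by a convex combination of vertices of $Q$ shifted up by a nonnegative vector, and a standard argument shows that this forces $v$ to be componentwise dominated by some vertex of $Q$. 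So first I would establish the correspondence $\mathrm{vert}(Q^{\uparrow}) = \mathrm{vert}(Q)$.

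Next I would analyze the edges of $Q^{\uparrow}$. Since $Q^{\uparrow}$ is a polyhedron of the form $Q + \mathbb{R}_{\geq 0}^E$ with recession cone $\mathbb{R}_{\geq 0}^E$, its one-dimensional faces come in two flavors: bounded edges, each of which is a translate (in fact, I claim, equal up to the dominant construction) of an edge of $Q$, and unbounded edges (rays), which are parallel to the extreme rays $\chi_e$ of the recession cone $\mathbb{R}_{\geq 0}^E$. The key geometric step is to show that the bounded edges of $Q^{\uparrow}$ are in direction-preserving bijection with the edges of $Q$: an edge $\{v, w\}$ of $Q$ with $v, w$ vertices gives the edge $\mathrm{conv}\{v,w\}$ of $Q^{\uparrow}$ provided this segment is not ``shadowed'' — but because $Q$ is a $\{0,1\}$-polytope and (once Lemma~\ref{lem:charMatBasePolytope} is invoked in one direction) edge directions are $\chi_e - \chi_f$, no shadowing occurs; conversely every bounded edge of $Q^{\uparrow}$ joins two vertices of $Q$ and hence, being a face of $Q^{\uparrow}$, lies in a supporting hyperplane that also supports $Q$, so it is an edge of $Q$.

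Putting these together: if $Q$ is a matroid base polytope, then by Lemma~\ref{lem:charMatBasePolytope} every edge of $Q$ has direction $\chi_e - \chi_f$, and the rays of $Q^{\uparrow}$ have direction $\chi_e$ (which we may write as $\chi_e - \chi_f$ only if we allowed $f$ trivial — so here I should be slightly careful: the corollary as stated should be read with the understanding that unbounded edge directions $\chi_e$ are automatically of the allowed type, or the statement implicitly restricts to bounded edges). Conversely, if every (bounded) edge direction of $Q^{\uparrow}$ is of the form $\chi_e - \chi_f$, then since every edge of $Q$ appears as a bounded edge of $Q^{\uparrow}$, every edge direction of $Q$ is $\chi_e - \chi_f$, and Lemma~\ref{lem:charMatBasePolytope} gives that $Q$ is a matroid base polytope. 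I expect the main obstacle to be the careful verification that no edge of $Q$ gets ``absorbed'' into the interior of a face of $Q^{\uparrow}$ and, symmetrically, that every bounded edge of $Q^{\uparrow}$ really is an edge (not just a subset of a higher-dimensional face) of $Q$ — this is where the antichain hypothesis and the $\{0,1\}$-structure must be used, and it is worth isolating as a small lemma about dominants of $\{0,1\}$-polytopes whose edge directions are differences of unit vectors.
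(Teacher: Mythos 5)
Your forward direction is essentially the paper's: bounded edges of $Q^{\uparrow}$ are edges of $Q$ (the unbounded one-dimensional faces are rays in directions $\chi_e$, and the paper's convention throughout is that an ``edge'' is a bounded segment $\overline{uv}$, cf.\ its later use in Lemma~\ref{lem:EDofPUp}), so your caveat about rays is fine. The problem is the converse direction. There you rely on the claim that \emph{every edge of $Q$ appears as a bounded edge of $Q^{\uparrow}$}, and your only justification for the absence of ``shadowing'' invokes the conclusion of the forward direction (edge directions of $Q$ being $\chi_e-\chi_f$), which is exactly what you are trying to prove in the converse. Worse, the unconditional claim is false for antichain $\{0,1\}$-polytopes. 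Take $E=\{1,\dots,5\}$ and
\[
u=(1,1,0,0,1),\quad v=(0,0,1,1,1),\quad w_1=(1,0,1,0,0),\quad w_2=(0,1,0,1,0),
\]
an antichain whose four points are affinely independent, so $Q=\mathrm{conv}\{u,v,w_1,w_2\}$ is a $3$-simplex and $\overline{uv}$ is an edge of $Q$. If $h$ satisfies $\langle h,u\rangle=\langle h,v\rangle\geq \langle h,w_1\rangle,\langle h,w_2\rangle$, then $(h_2+h_5-h_3)+(h_3+h_5-h_2)=2h_5\geq 0$, so $h_5\geq 0$; hence no componentwise negative $h$ exposes a face of $Q^{\uparrow}$ containing both $u$ and $v$, and $\overline{uv}$ is \emph{not} an edge of $Q^{\uparrow}$. (The corollary itself survives here because a different bad edge, e.g.\ $\overline{uw_1}$ with direction $(0,-1,1,0,-1)$, does persist in $Q^{\uparrow}$ --- but your argument gives you no handle on that.)

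The paper closes this gap with an argument you are missing: \emph{assuming} all edge directions of $Q^{\uparrow}$ are of the form $\chi_e-\chi_f$, the (bounded-edge) graph of $Q^{\uparrow}$ is connected and every such edge preserves the coordinate sum, so all vertices of $Q^{\uparrow}$ (which, by the antichain hypothesis, are exactly the vertices of $Q$) lie on a common hyperplane $\langle -\bm{1},x\rangle=b$. This hyperplane supports $Q^{\uparrow}$ and its face is precisely $Q$; hence $Q$ is a face of $Q^{\uparrow}$, every edge of $Q$ is an edge of $Q^{\uparrow}$, and Lemma~\ref{lem:charMatBasePolytope} applies. Note how the hypothesis on the edge directions of $Q^{\uparrow}$ is used \emph{before} transferring edges from $Q$ to $Q^{\uparrow}$, which is what breaks the circularity. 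If you want to keep your outline, you must replace your ``small lemma'' by this (or an equivalent) argument; as an unconditional statement about dominants of antichain $\{0,1\}$-polytopes it cannot be proved, since it is false.
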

\begin{proof}
If $Q$ is a matroid polytope then, by Lemma~\ref{lem:charMatBasePolytope}, each edge direction of $Q$ is parallel to $\chi_e - \chi_f$ for distinct $e,f\in E$. Since each edge direction of the dominant of a polytope is also an edge direction of the original polytope, we have that the edge directions of $Q^{\uparrow}$ are also of this form.

Conversely, assume that each edge direction of $Q^{\uparrow}$ is parallel to $\chi_e - \chi_f$ for some distinct $e,f\in E$. 
First observe that $Q$ and $Q^{\uparrow}$ have the same set of vertices. This follows by the fact that no two distinct vertices $u,v$ of $Q$ satisfy $u\leq v$ (component-wise).
Moreover, each edge direction of $Q^{\uparrow}$ is parallel to some vector of the form $\chi_{e} - \chi_{f} $ for distinct $e,f\in E$.
Since the set of all the vertices of $Q^{\uparrow}$ is linked (connected) by the edges of $Q^{\uparrow}$ (see, e.g.,~\cite{balinski_1961_graph}),  this implies that all vertices of $Q^{\uparrow}$ have the same $\ell_1$-norm and thus lie on a hyperplane of the form $\langle -\bm{1}, x\rangle = b$, where $\bm{1}\in\mathbb{R}^E$ is the all-ones vector. Hence, $\langle -\bm{1}, x \rangle = b$ is a supporting hyperplane of $Q^{\uparrow}$ and the face of $Q^{\uparrow}$ defined by this hyperplane is equal to $Q$.
Since $Q$ is a face of $Q^{\uparrow}$, each edge of $Q$ is also an edge of $Q^{\uparrow}$, and is thus parallel to $\chi_e - \chi_f$ for distinct $e,f\in E$. Lemma~\ref{lem:charMatBasePolytope} now implies that $Q$ is indeed a matroid base polytope.
\end{proof}

Hence, it suffices to show that any edge direction of $P_i^{\uparrow} = (P_i^{\min})^{\uparrow}$ for $i\in N$ is of type $\chi_e - \chi_f$ for distinct $e,f\in E$. Notice that Corollary~\ref{cor:charDomMatBasePolytope} can indeed be applied to the polytopes $(P_i^{\min})^{\uparrow}$, because any two distinct vertices $u,v$ of $P_i^{\min}$ correspond to characteristic vectors of two distinct sets in the clutter $(E,(\mathcal{S}_i)^{\min})$, which implies $u\not\leq v$ as required by Corollary~\ref{cor:charDomMatBasePolytope}.

The following well-known property links edge directions of a Minkowski sum of polytopes to the edge directions of the summands (see, e.g., \cite[Lemma~6.14]{thomas_1998_applications}).
\begin{lemma}\label{lem:minkowskiED}
Let $Q_1,\dots, Q_k \subseteq \mathbb{R}^E$ be polytopes
and $Q=\sum_{i=1}^k Q_i$ be their Minkowski sum. Then the
set of edge directions of $Q$ is the union of the sets
of edge directions of the $Q_i$ for $i\in [k]$.
\end{lemma}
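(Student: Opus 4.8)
\medskip
\noindent\textbf{Proof plan for Lemma~\ref{lem:minkowskiED}.}
The plan is to reduce immediately to the case $k=2$ by induction: writing $Q=\bigl(\sum_{i=1}^{k-1}Q_i\bigr)+Q_k$ and applying the inductive hypothesis to $Q':=\sum_{i=1}^{k-1}Q_i$ shows that it suffices to prove that the set of edge directions of $Q_1+Q_2$ equals the union of the sets of edge directions of $Q_1$ and of $Q_2$. Throughout I would use two standard facts. First, for any linear functional $c$ and polytopes $A,B$ one has $F_c(A+B)=F_c(A)+F_c(B)$, where $F_c(R)$ denotes the face of a polytope $R$ on which $\langle c,\cdot\rangle$ attains its maximum. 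Second, a nonzero vector $w$ is an edge direction of a polytope $R$ exactly when there is a linear functional $c$ with $F_c(R)$ a one-dimensional face (a segment) parallel to $w$.

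\smallskip
\noindent\emph{The easy inclusion.} Given an edge $e=F_c(Q_1+Q_2)$, I would write $e=F_c(Q_1)+F_c(Q_2)$. From $\dim(A+B)\ge\max(\dim A,\dim B)$ and $\dim e=1$, each $F_c(Q_j)$ is at most one-dimensional; from $\dim(A+B)\le\dim A+\dim B$ together with $\dim e=1$, at least one of them is an edge; and if both are edges they must be parallel, since otherwise their Minkowski sum would be two-dimensional. Hence the direction of $e$ is the direction of an edge of $Q_1$ or of $Q_2$.

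\smallskip
\noindent\emph{The harder inclusion.} Suppose $d$ is an edge direction of $Q_1$, realized by an edge $e=F_c(Q_1)$, and let $G:=F_c(Q_2)$. If $G$ is a vertex, then $F_c(Q_1+Q_2)=e+G$ is already an edge parallel to $d$, and we are done. Otherwise I would perturb to $c':=c+\varepsilon c''$ with $\varepsilon>0$ small enough that $F_{c'}(Q_j)=F_{c''}(F_c(Q_j))$ for $j=1,2$, so that $F_{c'}(Q_1+Q_2)=F_{c''}(e)+F_{c''}(G)$. The aim is to choose $c''$ orthogonal to $d$ — which forces $F_{c''}(e)=e$ — and such that $F_{c''}(G)$ is either a vertex of $G$ or an edge of $G$ parallel to $d$; in both cases $F_{c''}(e)+F_{c''}(G)$ is a one-dimensional face of $Q_1+Q_2$ with direction $d$. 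The existence of such a $c''$ I would obtain by a dimension count on the normal fan of $G$ inside the hyperplane $d^{\perp}$: if $d^{\perp}$ meets the interior of a full-dimensional cone of that fan we get a vertex of $G$; otherwise $d^{\perp}$ is covered by the normal cones of the positive-dimensional faces of $G$, and since the normal cones of faces of dimension at least two have codimension at least two and cannot cover the codimension-one set $d^{\perp}$, some edge $e'$ of $G$ must have its (codimension-one) normal cone contained in $d^{\perp}$, which forces $e'\parallel d$; choosing $c''$ in the relative interior of that normal cone then selects $F_{c''}(G)=e'$ while keeping $c''\perp d$.

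\smallskip
The step I expect to be the main obstacle is exactly this last case of the harder inclusion: when $F_c(Q_2)$ is higher-dimensional, one must show that confining the perturbations to the hyperplane $d^{\perp}$ — which is what preserves the edge $e$ of $Q_1$ — still allows $F_c(Q_2)$ to be collapsed to a face that is at most one-dimensional and, when one-dimensional, parallel to $d$. The normal-fan dimension count above is what secures this: the only way the collapse can fail to reach a vertex is if $F_c(Q_2)$ has an edge trapped inside $d^{\perp}$, and such an edge is automatically parallel to $d$ and hence harmless.
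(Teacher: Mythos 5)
Your argument is correct. Note that the paper does not actually prove Lemma~\ref{lem:minkowskiED}; it is invoked as a well-known fact with a citation to the literature, so there is no in-paper proof to compare against. What you give is a complete, self-contained proof along the standard lines: the identity $F_c(A+B)=F_c(A)+F_c(B)$ for faces of maximizers handles the inclusion of edge directions of $Q$ into those of the summands (your dimension bounds $\max(\dim A,\dim B)\le\dim(A+B)\le\dim A+\dim B$ and the parallelism observation are all sound), and the reverse inclusion is secured by your perturbation $c'=c+\varepsilon c''$ with $c''\perp d$ together with the normal-fan dimension count showing that $d^{\perp}$ either meets the interior of a vertex cone of $F_c(Q_2)$ or contains the codimension-one normal cone of an edge of $F_c(Q_2)$, which is then forced to be parallel to $d$. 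The reduction to $k=2$ by induction is immediate. This matches the argument one finds in the cited source; there is nothing to flag.
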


However, we are interested in the edge directions of $P_i^{\uparrow}$ for $i\in N$, which are unbounded polyhedra. Lemma~\ref{lem:minkowskiED} can easily be generalized to dominants of polytopes as shown below. This allows us to derive properties on the edge directions of $P^{\uparrow}_i$ for $i\in N$ from properties of edge directions of their Minkowski sum $P^{\uparrow}$.

\begin{lemma}\label{lem:minkowskiEDDom}
Let $Q_1,\ldots, Q_k \subseteq \mathbb{R}^E$ be polytopes and $Q=\sum_{i=1}^k Q_i$ be their Minkowski sum. As usual, let $Q_1^{\uparrow}, \ldots, Q_k^{\uparrow}$ and $Q^{\uparrow}$ denote the dominants of $Q_1, \ldots, Q_k$ and $Q$. Then the set of edge directions of $Q^{\uparrow}$ is the union of the sets of edge directions of the $Q_i^{\uparrow}$ for $i\in [k]$.
\end{lemma}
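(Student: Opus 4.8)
The plan is to reduce the statement to the behaviour of faces under Minkowski sums, exploiting that the dominants in question all share the same recession cone. First I would generalize \eqref{eq:domPRelDomPi}: since each $Q_i$ is a polytope, each $Q_i^{\uparrow}$ is a polyhedron, and
\[
 Q^{\uparrow} \;=\; \sum_{i=1}^{k} Q_i + \mathbb{R}_{\geq 0}^E \;=\; \sum_{i=1}^{k}\bigl(Q_i + \mathbb{R}_{\geq 0}^E\bigr) \;=\; \sum_{i=1}^{k} Q_i^{\uparrow},
\]
using $\sum_{i=1}^k \mathbb{R}_{\geq 0}^E = \mathbb{R}_{\geq 0}^E$. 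Each of $Q_1^{\uparrow},\dots,Q_k^{\uparrow}$ and $Q^{\uparrow}$ is a full-dimensional polyhedron whose recession cone is exactly $\mathbb{R}_{\geq 0}^E$, so $x\mapsto\langle c,x\rangle$ attains a maximum over each of them if and only if $c\in\mathbb{R}_{\leq 0}^E$. For such $c$ write $F_c(X)$ for the (nonempty) face of $X$ on which $\langle c,\cdot\rangle$ is maximized; the standard fact that the $\arg\max$ over a Minkowski sum splits gives $F_c(Q^{\uparrow})=\sum_{i=1}^k F_c(Q_i^{\uparrow})$. Finally I would recall that for a full-dimensional polyhedron $P\subseteq\mathbb{R}^E$ every face has the form $F_c(P)$, and a face $F$ of dimension $1$ equals $F_c(P)$ exactly for $c$ in the relative interior of its normal cone $N_P(F)$, which is a $(|E|-1)$-dimensional cone contained in $\mathbb{R}_{\leq 0}^E$ spanning the hyperplane orthogonal to the direction of $F$.

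Given this, the inclusion ``every edge direction of $Q^{\uparrow}$ is an edge direction of some $Q_i^{\uparrow}$'' is immediate: if $F=F_c(Q^{\uparrow})$ is one-dimensional with direction $v$, then $F=\sum_i F_c(Q_i^{\uparrow})$ forces each summand $F_c(Q_i^{\uparrow})$ to be a face of $Q_i^{\uparrow}$ of dimension at most $1$ and parallel to $v$, with at least one of them one-dimensional; that summand is then an edge of the corresponding $Q_i^{\uparrow}$ with direction $v$. For the reverse inclusion, fix $j$ and a one-dimensional face $e_j$ of $Q_j^{\uparrow}$ with direction $v$, and pick $c$ in the relative interior of $N_{Q_j^{\uparrow}}(e_j)$, a relatively open $(|E|-1)$-dimensional subset of $\{x:\langle v,x\rangle=0\}$ lying in $\mathbb{R}_{\leq 0}^E$. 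I would then perturb $c$ within this set so that, simultaneously for all $i\neq j$, the face $F_c(Q_i^{\uparrow})$ is either a vertex or an edge parallel to $v$; once this holds, $F_c(Q^{\uparrow})=\sum_i F_c(Q_i^{\uparrow})$ is the Minkowski sum of $e_j$ with points and with segments or rays all parallel to $v$, hence a one-dimensional face of $Q^{\uparrow}$ with direction $v$, so $v$ is an edge direction of $Q^{\uparrow}$. Such a $c$ exists because, for each $i\neq j$, the objective vectors producing a ``bad'' face of $Q_i^{\uparrow}$ — one of dimension $\geq 2$, or one-dimensional with direction $w\not\parallel v$ — form a finite union of normal cones, and each of these meets $\{x:\langle v,x\rangle=0\}$ in a set of dimension at most $|E|-2$ (a face of dimension $\geq 2$ has a normal cone of dimension $\leq|E|-2$; an edge with direction $w\not\parallel v$ has normal cone inside $\{x:\langle w,x\rangle=0\}$, whose intersection with $\{x:\langle v,x\rangle=0\}$ has dimension $|E|-2$). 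A finite union of sets of dimension $\leq|E|-2$ cannot cover the relatively open $(|E|-1)$-dimensional set in which $c$ is chosen.

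The delicate point is the dimension count in this last step. It rests on the identity ``dimension of the normal cone $=$ codimension of the face'', which is why full-dimensionality of the $Q_i^{\uparrow}$ and restriction of all objective vectors to the polar cone $\mathbb{R}_{\leq 0}^E$ are used throughout: passing to dominants turns the bounded $Q_i$ into unbounded polyhedra, and one must make sure this does not introduce spurious low-dimensional normal cones. It is also essential that the target for the summands with $i\neq j$ be ``a vertex \emph{or} an edge parallel to $v$'' rather than just ``a vertex'', since some $Q_i^{\uparrow}$ may genuinely possess an edge in direction $v$ that no perturbation within $\{x:\langle v,x\rangle=0\}$ can remove — and fortunately such an edge is harmless for the conclusion.
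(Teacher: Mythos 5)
Your proposal is correct, and its skeleton coincides with the paper's: both start from $Q^{\uparrow}=\sum_i Q_i^{\uparrow}$, observe that linear objectives are maximized over these dominants exactly when $c\le 0$, and use the fact that the maximizer face of a Minkowski sum is the sum of the maximizer faces of the summands. Where you diverge is in how the direction ``edge of $Q_j^{\uparrow}$ $\Rightarrow$ edge of $Q^{\uparrow}$'' is finished. The paper takes the defining objective $h<0$ of the edge as given, notes that the resulting faces $Q_i^h$ and $Q^h$ are all \emph{polytopes} (because $h<0$ kills the recession cone), and then simply invokes Lemma~\ref{lem:minkowskiED} on $Q^h=\sum_i Q_i^h$; edges of the face $Q^h$ are edges of $Q^{\uparrow}$. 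You instead perturb the objective inside the relative interior of the normal cone of the chosen edge, using the dimension count ``normal cone of a bad face meets $v^{\perp}$ in dimension at most $|E|-2$'' to force every other summand face to be a vertex or a segment parallel to $v$, so that the sum is visibly a one-dimensional face. This is a valid, self-contained argument, but it essentially re-proves the content of Lemma~\ref{lem:minkowskiED} (whose standard proof is exactly this general-position perturbation) rather than reducing to it; the paper's route is shorter because it reuses that lemma as a black box, while yours makes the normal-fan mechanics explicit and, as you note, makes visible why full-dimensionality of the dominants and the restriction to $c\le 0$ matter. One small point worth making explicit in your write-up: for a \emph{bounded} edge of a dominant, the relative interior of its normal cone lies in $\mathbb{R}_{<0}^E$ (a zero coordinate would force the face to be unbounded), which is what guarantees that all the summand faces you produce are bounded and that no spurious rays arise.
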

\begin{proof}
Let $j\in [k]$ and consider an edge $\overline{uv}$ of $Q^{\uparrow}_j$. We first show that $Q^{\uparrow}$ contains an edge parallel to $\overline{uv}$. 
There is a non-zero vector $h\in \mathbb{R}^E$ such that $\overline{uv}$ are the set of all maximizers of $\langle h,x \rangle$ for $x\in Q^{\uparrow}_j$. Because $Q_j^{\uparrow}$ is a dominant, we must have $h < 0$. For $i\in [k]$, let $Q_i^h\subseteq Q^{\uparrow}_i$ be the set of all maximizers in $Q^{\uparrow}_i$ of the linear function $\langle h, x \rangle$. Analogously, let $Q^h\subseteq Q^{\uparrow}$ be all points in $Q^{\uparrow}$ maximizing $\langle h, x\rangle$. Because the maximizers of a Minkowski sum are the sum of the maximizers of the summands, we have $Q^h = \sum_{i=1}^k Q_i^h$. Moreover, because $Q_1^{\uparrow}, \ldots, Q_k^{\uparrow}$, $Q^{\uparrow}$ are all dominants of polytopes and $h<0$, we have that $Q_1^h,\ldots, Q_k^h$ and $Q^h$ are also polytopes. Applying Lemma~\ref{lem:minkowskiED} to $Q^h = \sum_{i=1}^k Q_i^h$, we obtain that $Q^h$ contains an edge parallel to $\overline{uv}$. Finally, since $Q^h$ is a face of $Q^{\uparrow}$, every edge of $Q^h$ is also an edge of $Q^{\uparrow}$, implying that $Q^{\uparrow}$ contains an edge parallel to $\overline{uv}$, as desired.

Conversely, we have to show that for any edge $\overline{uv}$ of $Q^{\uparrow}$ there is an index $j\in [k]$ such that $Q^{\uparrow}_j$ contains an edge parallel to $\overline{uv}$. This can be proven analogously to the previous case, by considering a vector $h<0$ such that $\overline{uv}$ are all maximizers of $\langle h,x \rangle$ for $x \in Q^{\uparrow}$, and considering for each $Q_1^{\uparrow},\ldots,Q_k^{\uparrow}$ and $Q^{\uparrow}$ the face corresponding to all maximizers of $\langle h, x\rangle$. The result then again follows by Lemma~\ref{lem:minkowskiED}.
\end{proof}

Hence, Corollary~\ref{cor:charDomMatBasePolytope} and Lemma~\ref{lem:minkowskiEDDom} imply that to prove the direction (II) $\Rightarrow$ (I) of Theorem~\ref{thm:braess}, it suffices to show the following.

\begin{proposition}\label{prop:EDofPup}
Each edge direction of $P^{\uparrow}$ is parallel
to some vector $\chi_{e}  - \chi_{f} $ with distinct $e,f\in E$.
\end{proposition}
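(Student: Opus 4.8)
The plan is to prove the proposition by contradiction, turning a forbidden edge direction into a weak Braess paradox. Suppose $P^{\uparrow}$ has an edge $\overline{uv}$ whose direction $r:=v-u$ is not parallel to any $\chi_e-\chi_f$. I will construct two nonatomic congestion models $\M,\bar\M$ on the strategy spaces $(\S_i)_{i\in N}$, with all demands equal to $1$ and with $\bar\M$ obtained from $\M$ only by lowering cost functions, together with Wardrop equilibria of $\M$ and $\bar\M$ on which some resource has strictly larger cost; this contradicts \eqref{wBP-free} and proves the claim.

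First I would normalize the edge. Since $P^{\uparrow}=P+\R_{\ge0}^E$ has recession cone $\R_{\ge0}^E$, a bounded edge of $P^{\uparrow}$ is exposed by a strictly positive functional $g>0$, and one has $\arg\min_{P^{\uparrow}}\langle g,\cdot\rangle=\arg\min_P\langle g,\cdot\rangle=\overline{uv}$; in particular $\overline{uv}$ is an edge of $P$, the vector $g$ lies in the relative interior of its normal cone $N_P(\overline{uv})=N_P(u)\cap N_P(v)$, and $\langle g,r\rangle=0$. By Lemma~\ref{lem:minkowskiED}, $r$ is parallel to an edge direction $\chi_{S'}-\chi_S$ of some summand $P_j$ (with $S,S'\in\S_j$); since the vertices of $P$ are integral, $r=\lambda(\chi_{S'}-\chi_S)$ with $\lambda\in\mathbb{Z}_{\ge1}$, so every nonzero entry of $r$ equals $\pm\lambda$. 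Writing $E^{+}=\{e:r_e>0\}$ and $E^{-}=\{e:r_e<0\}$, the relations $g>0$, $\langle g,r\rangle=0$, $r\neq 0$ force $E^{+}\neq\emptyset\neq E^{-}$, and the hypothesis on $r$ forces $|E^{+}|\ge2$ or $|E^{-}|\ge2$. After possibly swapping the roles of $u$ and $v$ (replacing $r$ by $-r$), we may assume $|E^{+}|\ge2$ and fix two distinct resources $e^{*},e'\in E^{+}$.

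The construction, which I expect to be the main obstacle, then goes as follows. Fix a small $\beta>0$ with $2\beta<g_{e'}$. For $\M$ put $c_e\equiv g_e$ for all $e\neq e^{*}$, and $c_{e^{*}}(t)=g_{e^{*}}+\psi(t)$ where $\psi$ is continuous, nondecreasing, $\psi\equiv0$ on $[0,u_{e^{*}}]$, and $\psi(v_{e^{*}})=\beta$. Because $\psi(u_{e^{*}})=0$, the gradient of the Beckmann potential of $\M$ at $u$, namely $(c_e(u_e))_{e\in E}$, equals $g\in N_P(u)$, so $u$ is the load vector of a Wardrop equilibrium of $\M$ (by Theorem~\ref{thm:beck}, applied over $\tilde P(\M)=P$). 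For $\bar\M$ keep every cost of $\M$ except on $e'$, and set $\bar c_{e'}\equiv g_{e'}-2\beta$; since $e^{*}\neq e'$ this leaves $c_{e^{*}}$ unchanged. The gradient of the Beckmann potential of $\bar\M$ at $v$ is $g+\beta\chi_{e^{*}}-2\beta\chi_{e'}$, whose inner product with $r$ is $\lambda(\beta-2\beta)<0$; since $g$ lies in the relative interior of the facet $N_P(\overline{uv})$ of the full-dimensional cone $N_P(v)$ (which near $g$ coincides with the halfspace $\{h:\langle h,r\rangle\le0\}$), for $\beta$ small this gradient lies in $N_P(v)$, so $v$ is the load vector of a Wardrop equilibrium of $\bar\M$. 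Moreover $\bar c_e\le c_e$ for all $e$ and no demand was changed, so $(\M,\bar\M)$ is an admissible pair.

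Finally I would read off the paradox. On the equilibrium of $\M$ the load on $e^{*}$ is $u_{e^{*}}$, with cost $c_{e^{*}}(u_{e^{*}})=g_{e^{*}}$; on the equilibrium of $\bar\M$ the load on $e^{*}$ is $v_{e^{*}}=u_{e^{*}}+\lambda>u_{e^{*}}$, with cost $\bar c_{e^{*}}(v_{e^{*}})=c_{e^{*}}(v_{e^{*}})=g_{e^{*}}+\beta>g_{e^{*}}$. Hence resource $e^{*}$ has strictly increased cost, i.e.\ $(E,\S_i)_{i\in N}$ admits the weak Braess paradox, contradicting \eqref{wBP-free}. Two points deserve care. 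One is verifying that $u$ and $v$ are genuinely Beckmann optima, i.e.\ that the perturbed gradients still lie in $N_P(u)$ and $N_P(v)$; this is exactly why $g$ is taken in the relative interior of the edge's normal cone, so that a small perturbation pointing to the correct side of $N_P(\overline{uv})$ stays inside the adjacent vertex cone. The other is the role of the hypothesis $|E^{+}|\ge2$: it is needed precisely so that the resource $e'$ on which we lower the cost (to pull the equilibrium from $u$ to $v$) can be chosen different from the witness resource $e^{*}$; if $|E^{+}|=|E^{-}|=1$ one is forced to take $e'=e^{*}$, the cost reduction on $e^{*}$ cancels the cost increase coming from its higher load, and no paradox arises — in accordance with the immunity of matroid set systems established in the other direction.
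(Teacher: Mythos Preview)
Your argument is correct. The overall strategy matches the paper's: assume an edge $\overline{uv}$ of $P^{\uparrow}$ has a direction $w\in\{-1,0,1\}^E$ with (after a sign flip) at least two $+1$-entries, then manufacture cost functions so that a Wardrop equilibrium sits on the edge, reduce the cost on one of the two $+1$-resources to push the equilibrium further along $w$, and observe that the cost on the other $+1$-resource strictly increases, yielding a weak Braess paradox.

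Where you differ is in the concrete construction. The paper (via Lemma~\ref{lem:costForPUp}) uses affine, strictly increasing costs so that the Beckmann potential is strictly convex and quadratic; its unique minimizer sits at the \emph{midpoint} $x^0$ of the edge, and a small additive reduction on one resource shifts the unique minimizer to a nearby interior point $x^0+\beta w$. Strict convexity buys uniqueness of equilibria for free, and the argument that the new minimizer stays on the edge is handled by a KKT computation showing $-\nabla\Phi^{(\alpha,f)}(y)\perp w$ and is close to $h$. You instead take constant costs $c_e\equiv g_e$ everywhere except on the witness resource $e^*$, where you introduce a nondecreasing bump $\psi$; your equilibria sit at the \emph{vertices} $u$ and $v$, and the jump from $u$ to $v$ is forced by the normal-cone inclusion $g+\beta\chi_{e^*}-2\beta\chi_{e'}\in N_P(v)$, which holds for small $\beta$ because $g$ is in the relative interior of the facet $N_P(\overline{uv})$ of $N_P(v)$ and the perturbation has $\langle\cdot,r\rangle<0$. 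This avoids the separate technical lemma and is arguably more elementary; the trade-off is that your equilibria need not be unique (your potential is only convex, not strictly so), but this is immaterial since the definition of the weak Braess paradox only asks for \emph{some} pair of equilibria. One small presentational point: you should fix $\beta$ once, simultaneously enforcing $2\beta<g_{e'}$ and $\|g'-g\|$ small enough for the normal-cone inclusion, before defining $\psi$; as written the two smallness requirements appear in separate places.
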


Indeed, proving Proposition~\ref{prop:EDofPup} implies
by Lemma~\ref{lem:minkowskiEDDom} that each edge direction
of $P^{\uparrow}_i$ for $i\in N$ is parallel to
$\chi_{e} -\chi_{f} $ for some distinct $e,f\in E$,
which, by applying Corollary~\ref{cor:charDomMatBasePolytope} to $Q=P^{\min}_i$ (notice that $(P^{\min}_i)^{\uparrow} = P_i^{\uparrow}$),
implies that $P^{\min}_i$
for $i\in N$ are matroid base polytopes as desired.
Hence, it remains to prove Proposition~\ref{prop:EDofPup}.


We start with a simple observation.

\begin{lemma}\label{lem:EDofPUp}
Let $\overline{uv}$ be an edge of $P^{\uparrow}$, then
$\overline{uv}$ is parallel to some vector
in $\{-1,0,1\}^E$ with at least $2$ non-zero values.
\end{lemma}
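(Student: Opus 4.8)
The plan is to reduce a bounded edge $\overline{uv}$ of the unbounded polyhedron $P^{\uparrow}$ to an ordinary edge of the polytope $P$, and then exploit the Minkowski‑sum decomposition $P=\sum_{i\in N}P_i$ together with the fact that each $P_i$ is a $0/1$‑polytope.

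First I would certify the edge by a linear functional. Since $\overline{uv}$ is a one‑dimensional face of $P^{\uparrow}$, pick $h\in\mathbb{R}^E$ in the relative interior of its outer normal cone, so that $\overline{uv}=\arg\max\{\langle h,x\rangle\mid x\in P^{\uparrow}\}$. I claim $h<0$ componentwise. The recession cone of $P^{\uparrow}$ is $\mathbb{R}_{\geq 0}^E$, so if $h_e>0$ for some $e\in E$ the value $\langle h,\cdot\rangle$ would be unbounded along $\chi_e$, contradicting that the maximizing face $\overline{uv}$ is bounded; hence $h\leq 0$. And if $h_e=0$ for some $e$, then $\overline{uv}+\mathbb{R}_{\geq 0}\chi_e\subseteq P^{\uparrow}$ would consist entirely of maximizers, so the maximizing face would be unbounded, again a contradiction. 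Thus $h<0$.

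Next, for strictly negative $h$ one has $\arg\max\{\langle h,x\rangle\mid x\in P^{\uparrow}\}=\arg\max\{\langle h,x\rangle\mid x\in P\}$: writing a point of $P^{\uparrow}$ as $p+r$ with $p\in P$, $r\in\mathbb{R}_{\geq 0}^E$ gives $\langle h,p+r\rangle=\langle h,p\rangle+\langle h,r\rangle\leq\langle h,p\rangle$, with equality iff $r=0$. Hence $\overline{uv}$ is a face of $P$, and being a segment it is an edge of $P$. Now Lemma~\ref{lem:minkowskiED} applied to $P=\sum_{i\in N}P_i$ shows that $v-u$ is parallel to an edge direction of some $P_i$. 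Since $P_i$ is the convex hull of the $0/1$‑vectors $\{\chi_S\mid S\in\S_i\}$, every edge of $P_i$ joins two such vertices $\chi_S,\chi_{S'}$, so its direction $\chi_S-\chi_{S'}$ lies in $\{-1,0,1\}^E$; consequently $v-u$ is parallel to a vector in $\{-1,0,1\}^E$ with the same support. Finally, if that vector had a single nonzero entry, i.e.\ $v-u$ were parallel to $\pm\chi_e$, then $\langle h,v-u\rangle=\pm t\,h_e\neq 0$ for the relevant $t>0$ (as $h_e<0$), contradicting that $u$ and $v$ are both maximizers of $\langle h,\cdot\rangle$; since $u\neq v$, the support has size at least $2$, which finishes the proof.

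I expect the only mildly delicate point to be the bookkeeping that turns a bounded edge of the dominant $P^{\uparrow}$ into an edge of $P$ carrying a strictly negative outer normal; once that is in place, the statement follows directly from Lemma~\ref{lem:minkowskiED} and the $0/1$‑structure of the $P_i$.
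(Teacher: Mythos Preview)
Your proof is correct and follows essentially the same approach as the paper: both reduce the edge of $P^{\uparrow}$ to an edge of $P$, apply Lemma~\ref{lem:minkowskiED} to pass to some $P_i$, use that $P_i$ is a $\{0,1\}$-polytope to get a $\{-1,0,1\}^E$ direction, and rule out a single nonzero entry via up-closedness. You supply more detail (the strictly negative normal $h$ and the orthogonality argument), whereas the paper simply asserts that edges of $P^{\uparrow}$ are edges of $P$ and that a direction $\pm\chi_e$ is impossible since $P^{\uparrow}$ is up-closed.
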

\begin{proof}
Notice that every edge of $P^{\uparrow}$ must also be an edge of $P$.
Hence, by Lemma~\ref{lem:minkowskiED} the edge $\overline{uv}$ (of $P$)
must be parallel to some edge direction of one of the
polytopes $P_i$ for some $i\in N$. The statement now follows
by observing that all edge directions of $P_i$ are within
$\{-1,0,1\}^E$ because $P_i$ is a $\{0,1\}$-polytope;
moreover, the edge direction cannot be of the form
$\chi_{e} $ since $P^{\uparrow}$ is up-closed,
implying that any edge direction of $P^{\uparrow}$ must
have at least $2$ non-zero values.
\end{proof}

The following lemma shows the existence of particular
types of cost functions which will help us to derive
further properties on edge directions by using the fact
that no weak Braess paradox exists for the strategy
spaces $(\S_i)_{i\in N}$.

\begin{lemma}\label{lem:costForPUp}
Let $x^0$ be the midpoint of an edge $\overline{uv}$ of
$P^{\uparrow}$, and let $w\in \{-1,0,1\}^E$ be a vector
parallel to $\overline{uv}$.
Then there exist strictly positive, continuous and strictly
increasing
cost functions $c_e$ for $e\in E$ such that
the following holds:
\begin{enumerate}
\item\label{item:x0UniqueMin}
The Beckmann potential $\Phi(x)$ that corresponds to
$(c_e)_{e\in E}$ has $x^0$ as a unique minimizer
over $P^{\uparrow}$.

\item\label{item:uniqueMinMoves}
For any $f\in E$ with $w_f=1$ and
$\alpha >0$ let $c_e^{(\alpha,f)}$ for $e\in E$
be the following cost function:
\begin{equation*}
c_e^{(\alpha,f)}(t) = \begin{cases}
c_e(t) & \text{if } e\in E\setminus \{f\},\\
c_e(t) - \alpha & \text{if } e=f.
\end{cases}
\end{equation*}
Then for a sufficiently small $\alpha >0$,
the costs $(c_e^{(\alpha,f)})_{e\in E}$ are
non-negative, continuous and strictly increasing and
the corresponding Beckmann potential
$\Phi^{(\alpha,f)}(x)$ has a unique minimizer
over $P^{\uparrow}$ which is of the form
$x^0 + \beta w$ for some $\beta >0$.
\end{enumerate}
\end{lemma}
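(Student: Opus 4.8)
The plan is to construct the cost functions explicitly as small perturbations of a fixed strictly increasing "baseline'' that already singles out $x^0$. For part~\ref{item:x0UniqueMin}, first note that since $\overline{uv}$ is an edge of the unbounded polyhedron $P^{\uparrow}$, there is a linear functional $\langle h,\cdot\rangle$ whose set of maximizers over $P^{\uparrow}$ is exactly $\overline{uv}$; because $P^{\uparrow}$ is up-closed we may take $h<0$ component-wise. I would like the minimizer of $\Phi$ over $P^{\uparrow}$ to be the midpoint $x^0$ of this edge. The idea is to pick each $c_e$ to be a strictly increasing affine function $c_e(t)=a_e + b_e t$ with $a_e,b_e>0$ so that $\Phi(x)=\sum_e (a_e x_e + \tfrac{b_e}{2}x_e^2)$ is a strictly convex quadratic, hence has a unique minimizer over $P^{\uparrow}$; then choose the slopes $b_e$ and intercepts $a_e$ so that $\nabla\Phi(x^0) = -h + \lambda$ for a suitable $\lambda\ge 0$ supported off the edge, so that the KKT conditions for minimizing $\Phi$ over $P^{\uparrow}$ are satisfied uniquely at $x^0$. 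Concretely one can take $b_e$ tiny and essentially set the ``first-order part'' $a_e := -h_e - \delta_e$ where $\delta_e=0$ on the two coordinates moved by $w$ and $\delta_e>0$ elsewhere (chosen small enough to keep $a_e>0$, using $h<0$); strict convexity from the $b_e t$ terms then forces uniqueness of the minimizer, and one checks the minimizer is exactly $x^0$ by verifying stationarity along the edge direction and the inequality conditions transverse to it.

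For part~\ref{item:uniqueMinMoves}, I would use the robustness of this construction under small perturbations. Lowering $c_f$ by a constant $\alpha$ changes $\nabla\Phi$ only in the $f$-th coordinate, by $-\alpha$. For $\alpha>0$ small the new cost $c_f^{(\alpha,f)}(t)=a_f-\alpha + b_f t$ is still strictly increasing; it is non-negative for all $t\ge 0$ provided $a_f\ge\alpha$, which holds for small $\alpha$. The perturbed potential $\Phi^{(\alpha,f)}$ is still a strictly convex quadratic on $\mathbb R^E$, so it has a unique minimizer $x(\alpha)$ over $P^{\uparrow}$, and $x(0)=x^0$. Since $f$ is a coordinate with $w_f=1$, the gradient at $x^0$ now points in a direction that makes moving from $x^0$ along $+w$ (the edge direction with $f$-entry $+1$) strictly decrease $\Phi^{(\alpha,f)}$ to first order: the directional derivative $\langle\nabla\Phi^{(\alpha,f)}(x^0),w\rangle = \langle\nabla\Phi(x^0),w\rangle - \alpha\, w_f = 0 - \alpha < 0$, using that $x^0$ was stationary along the edge for the unperturbed problem. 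Hence the minimizer moves strictly into the edge, and because the edge $\overline{uv}$ is a face of $P^{\uparrow}$ and $\alpha$ is small, $x(\alpha)$ stays on that edge, i.e. $x(\alpha)=x^0+\beta w$ for some $\beta>0$; uniqueness again follows from strict convexity. One should quantify ``sufficiently small'': pick $\alpha$ smaller than $a_f$ (non-negativity), smaller than some threshold ensuring the perturbed minimizer over the whole of $P^{\uparrow}$ is still attained in the relative interior of $\overline{uv}$ rather than jumping to a neighboring face, and small enough that $\beta$ does not exceed the length of the edge (guaranteeing $\beta>0$ and $x^0+\beta w\in\overline{uv}$).

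The main obstacle I anticipate is the careful bookkeeping in part~\ref{item:x0UniqueMin}: verifying that one can simultaneously make $\Phi$ strictly convex, keep all $a_e,b_e>0$ (hence all $c_e$ strictly positive and strictly increasing), and arrange the KKT/optimality conditions~\eqref{thm:fuj}-style transverse inequalities so that $x^0$ is the \emph{unique} minimizer over $P^{\uparrow}$ and not merely \emph{a} minimizer. The cleanest route is probably to first choose a linear $\Phi_{\mathrm{lin}}$ whose minimizers over $P^{\uparrow}$ form exactly the edge $\overline{uv}$ (dual to the face description of that edge), then add $\sum_e \tfrac{b_e}{2}x_e^2$ with $b_e$ small enough that the unique quadratic minimizer lies within that edge, and finally shift/scale to land precisely at the midpoint $x^0$ and to restore positivity of all coefficients. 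Everything else — non-negativity after the $\alpha$-shift, the sign of the directional derivative, persistence of the face structure under a small perturbation — is a routine continuity/compactness argument once this baseline is in place.
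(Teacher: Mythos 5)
Your proposal follows essentially the same route as the paper: choose a supporting functional $h<0$ exposing the edge $\overline{uv}$, take strictly increasing affine costs whose separable quadratic Beckmann potential is centered at $x^0$ (the paper uses $c_e(t)=-h_e+2\gamma(t-x^0_e)$ with $\gamma$ small enough to keep $c_e(0)>0$, which makes $x^0$ simultaneously minimize the quadratic term and maximize $\langle h,\cdot\rangle$, hence the unique minimizer), and then observe that subtracting $\alpha$ from $c_f$ slides the minimizer along $w$. The only step you defer as ``routine continuity'' — that the perturbed minimizer actually stays on the edge rather than leaving it — is where the paper does its real work: it exhibits the candidate $y=x^0+\beta_\alpha w$ with $\beta_\alpha=\alpha/(2\gamma\|w\|_2^2)$, checks $-\nabla\Phi^{(\alpha,f)}(y)\perp w$, and uses that $h$ lies in the relative interior of the (constant) normal cone along the open edge so that any nearby vector orthogonal to $w$ is still in that cone; your sketch is compatible with this but should be completed in exactly that way, since ``the minimizer is near $x^0$'' alone does not force it onto the edge.
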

\begin{proof}
Let $\langle h, x\rangle =b$ be a supporting hyperplane
of $P^{\uparrow}$ defining the edge $\overline{uv}$. Hence, $\overline{uv}$ are the set of all maximizers of $\langle h, x\rangle$ for $x\in P^{\uparrow}$.
Notice that $h<0$ (component-wise) because $P^{\uparrow}$ is a dominant. Indeed $h_e >0$ for some $e\in E$ is not possible because in this case $\overline{uv}$ cannot be maximizers with respect to $h$ as increasing the component corresponding to $e$ leads to better points. Moreover, there is also no $e\in E$ with $h_e=0$, because this would again imply that, given any point $y$ on $\overline{uv}$, one can increase the $e$-component of $y$ arbitrarily and remain a maximizer with respect to $h$. However, as $\overline{uv}$ is an edge, it is bounded and it is therefore not possible to increasing any component arbitrarily and remain on the set of maximizers $\overline{uv}$. Thus, $h<0$.

Furthermore, we define
\[\gamma = \min\left\{\frac{-h_e}{4 x^0_e} \middle \vert e\in E,x^0_e>0 \right\},\]
and for $t \geq 0$, we let
\begin{equation*}
c_e(t) = -h_e + 2\gamma (t-x^0_e),
\end{equation*}
which leads to the Beckmann potential
\begin{equation*}
\Phi(x) = -\langle h , x \rangle + \gamma \|x-x^0\|^2_2-\gamma\|x^0\|^2_2.
\end{equation*}
Clearly, the cost functions $c_e$ are strictly positive,
continuous and strictly increasing  (recall that $h_e<0$ for all $e\in E$).
Moreover, $x^0$ is indeed the unique minimizer
of $\Phi(x)$ on $P^{\uparrow}$ since it is the unique
minimizer of $\gamma \|x-x^0\|_2^2$ over $\mathbb{R}^E$
and $x^0$ is a maximizer of $\langle h, x\rangle$
over $P^{\uparrow}$.
This shows~\ref{item:x0UniqueMin}.

To show~\ref{item:uniqueMinMoves} we show that for a sufficiently
small $\alpha >0$ the Beckmann potential $\Phi^{(\alpha,f)}(x)$
has as unique minimizer the point
$y = x^0 + \beta_\alpha w$, where
\begin{equation*}
\beta_\alpha = \frac{\alpha}{2 \gamma \|w\|_2^2}.
\end{equation*}
We will choose $\alpha >0$ small enough such that
$y$ is still in the relative interior of the edge
$\overline{uv}$.

Notice that by definition of the Beckmann potential
its gradient at a point $x$ is given
by the cost functions, hence,
\begin{equation*}
\nabla \Phi^{(\alpha,f)} (x) = 
(c^{(\alpha,f)}_e(x))_{e\in E}.
\end{equation*}

Since $\Phi^{(\alpha,f)}(x)$ (like $\Phi(x)$) is a strictly
convex function, it has a unique minimizer over $P^{\uparrow}$.
Moreover, by the Karush-Kuhn-Tucker conditions we have that
$y$ is a minimizer of $\Phi^{(\alpha,f)}(x)$ over $P^{\uparrow}$
if and only if $-\nabla \Phi^{(\alpha,f)}$ is spanned by
the cone $\mathcal{C}(y)\subseteq \mathbb{R}^E$
of the normal vectors of the linear constraints
of $P^{\uparrow}$ that are tight at $y$. To be precise,
for the above statement to hold, we consider
an inequality description of $P^{\uparrow}$ where, in particular,
a linear equality constraint is represented by two opposing
inequality constraints.
Notice that $\mathcal{C}(x)$ is the same cone for
any point $x$ in the relative interior of the edge $\overline{uv}$,
since all points in the relative interior of $\overline{uv}$
have the same set of tight constraints of $P^{\uparrow}$.

Furthermore, $h$ is in the relative interior of
$\mathcal{C}(y)$ because the supporting hyperplane
$\langle h,x \rangle = b$ defines the edge $\overline{uv}$.
This implies that there exists an $\epsilon >0$ such that
any vector $h'\in \mathbb{R}^E$ with $\|h-h'\|_2 \leq \epsilon$
and $h' \perp w$ satisfies $h'\in \mathcal{C}(y)$.
We show that for a small enough $\alpha$,
the negative gradient $-\nabla \Phi^{(\alpha,f)} (y)$ can be chosen
as such an $h'$.

We indeed have $-\nabla\Phi^{(\alpha,f)}(y) \perp w$ since
\begin{align*}
\langle \nabla \Phi^{(\alpha,f)} (y), w\rangle &=
  -\langle h,w\rangle + 2 \gamma \beta_\alpha \|w\|_2^2 - \alpha \\
   &= 0.  && \text{(using $h \perp w$ and definition of $\beta_\alpha$)}
\end{align*}
Furthermore since $\nabla \Phi^{(\alpha, f)}(y)$ is continuous
in $\alpha$, and $-\nabla \Phi^{(0,f)}(y) = -\nabla \Phi (x^0) = h$,
we have for a small enough $\alpha > 0$ that
$\| h-(-\nabla \Phi^{(\alpha, f)} (y))\|_2 \leq \epsilon$,
as desired.
\end{proof}

We finally complete the proof of (II) $\Rightarrow$ (I) by
showing Proposition~\ref{prop:EDofPup}.
\begin{proof}(of Proposition~\ref{prop:EDofPup})
Assume by the sake of contradiction that the proposition
does not hold.
Hence, there is an edge $\overline{uv}$ of $P^{\uparrow}$,
which is parallel to some vector $w\in \{-1,0,1\}^E$
with at least $2$ non-zero entries by Lemma~\ref{lem:EDofPUp},
and such that $w$ has either two $1$-entries or two
$-1$-entries. Without loss of generality we can assume
that $w$ has two $1$-entries (for otherwise consider $-w$).
Let $f,g\in E$ be two such entries, i.e., $w_f = w_g = 1$.
We now invoke Lemma~\ref{lem:costForPUp} with respect
to the edge $\overline{uv}$ and the element $f\in E$
to obtain the following.
There are cost functions $(c_e)_{e\in E}$ such
that the corresponding Beckmann potential $\Phi(x)$
has its unique minimum at the midpoint of
the edge $\overline{uv}$, denoted by $x^0$.
Moreover, it suffices to reduce the cost
function for element $f$ to obtain new cost
functions $(\bar{c}_e)_{e\in E}$ with a new corresponding
potential
($\Phi^{(\alpha,f)}$ in Lemma~\ref{lem:costForPUp})
whose unique minimum over
$P^{\uparrow}$ is equal to $y=x^0 + \beta w$
for some $\beta >0$.
However, since $w_g=1$, this implies that
$y_g > x^0_g$, and hence
\begin{align*}
\bar{c}_g(y_g) = c_g(y_g) > c_g(x^0_g),
\end{align*}
where the second equality follows from the fact that
$c_g$ is strictly increasing by Lemma~\ref{lem:costForPUp}.
Hence, this shows that we have the weak Braess paradox
if the combined strategy spaces are described by
$P^{\uparrow}$.

However, because the unique minimizers
of the two considered Beckmann potentials above
over $P^{\uparrow}$ are both part of $P$---
since they lie on $\overline{uv}$, which is an edge
of $P$---they are also the unique minimizers of the
same Beckmann potentials over the smaller set $P$.
This implies that we have the weak Braess paradox
over the strategy spaces $(\S_i)_{i\in N}$, which is
a contradiction.
\end{proof}

\section{The Strong Braess Paradox for Non-Matroid Set Systems.} \label{sec:strongBP}
We now investigate a combinatorial property of the set systems so that the strong Braess
paradox does not occur. 
In contrast to
the weak Braess paradox, the matroid property is not necessary
for immunity against the strong Braess paradox.
Milchtaich~\cite{Milchtaich06GEB}, for example, shows that if the strategy space of every player is symmetric
and corresponds to the paths of a series-parallel $s$-$t$ graph, then
there will be no strong Braess paradox. Note that in this case the
resulting set systems need not be bases of matroids.

In this section, we derive a characterization of the 
strong Braess paradox that does not take into account the global structure of the game.
Specifically,  we show that the matroid property is the maximal condition on the players' strategy spaces that guarantees that the strong Braess paradox does not occur \emph{without} taking into account how the strategy spaces of different players interweave (cf.~Ackermann, R\"oglin and V\"ocking~\cite{Ackermann08,Ackermann09} who introduced the notion of interweaving of strategy spaces).
To state this property mathematically precisely, 
we introduce the notion of \emph{embeddings}.
Let $\tilde{E}=\{e_1, \ldots, e_p\}$ be a set consisting of $p=\sum_{i\in N}|E_i|$ elements,
where, as before, $E_i=\cup_{S\in\S_i}, i\in N$. 
Formally, an embedding is a map $\tau:=(\tau_i)_{i\in N}$, where every $\tau_i: E_i \rightarrow \tilde{E}$ is an injective map from $E_i$ to $\tilde{E}$.
The embedding of $\S_i$ in $\tilde{E}$ according to $\tau$ is then defined by
identifying every $S=\{e_1,\dots, e_k\}\in \S_i$ with $\tau_i(S):=\{\tau_i(e_1),\dots, \tau_i(e_k)\}$
and $\tau(\S_i):=\{\tau_i(S)|S\in \S_i\}$.
Given $(\S_i)_{i\in N}$ and $\tau$, the new combined strategy space is then denoted by $(\tau_i(\S_i))_{i\in N}$. 

\begin{definition}\label{def:embeddings}
A family of set systems $(E,\S_i)_{i\in N}$ with $\S_i\subseteq 2^{E}, i\in N$ is said to be 
\emph{universally} immune to the strong Braess paradox if 
for all embeddings $\tau$ in $\tilde{E}$, the set systems $(\tilde{E},\tau_i(\S_i))_{i\in N}$
do not admit the strong Braess paradox {\rm (}in the sense of 
Definition~{\rm \ref{def:braess})}.
\end{definition}

Since any embedding of the set of bases of a matroid into a ground set
of resource is a set of bases of a matroid again, we obtain the
following immediate consequence of Theorem \ref{thm:braess}.
\begin{corollary}\label{cor:strongBP}
If for each $i\in N$ the clutter $(E,(\S_i)^{\min})$ forms the base set 
of a matroid $M_i=(E,\I_i)$, then the family of set systems $(E,\S_i)_{i\in N}$
is universally immune to the strong Braess paradox.
\end{corollary}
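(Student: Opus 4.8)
The plan is to derive the corollary directly from the implication (I)~$\Rightarrow$~(II) of Theorem~\ref{thm:braess}, combined with the elementary fact recorded in the remark after Definition~\ref{def:braess} that immunity to the weak Braess paradox entails immunity to the strong one. Since universal immunity to the strong Braess paradox is a statement about \emph{all} embeddings $\tau$ of $(E,\S_i)_{i\in N}$ into some ground set $\tilde E$ (cf.~Definition~\ref{def:embeddings}), the only thing that really needs to be checked is that the hypothesis---each clutter $(E,(\S_i)^{\min})$ being a matroid base family---is preserved under taking embeddings, so that Theorem~\ref{thm:braess} becomes applicable to each embedded family $(\tilde E,\tau_i(\S_i))_{i\in N}$.

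First I would fix an arbitrary embedding $\tau=(\tau_i)_{i\in N}$ into $\tilde E$ and record that, because each $\tau_i$ is injective on $E_i$, it preserves strict inclusion among subsets of $E_i$. Hence a set $S\in\S_i$ is inclusion-wise minimal in $\S_i$ if and only if $\tau_i(S)$ is inclusion-wise minimal in $\tau_i(\S_i)$, i.e.
\[
(\tau_i(\S_i))^{\min}=\tau_i\big((\S_i)^{\min}\big)\qquad\text{for all }i\in N .
\]
In other words, passing to the clutter commutes with the relabelling induced by $\tau$.

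Next I would argue that the image of a matroid base family under an injective map is again a matroid base family on the new ground set. Concretely, suppose $(E,(\S_i)^{\min})=\B(M_i)$ for a matroid $M_i=(E,\I_i)$. All these bases lie in $E_i$, and every element of $E\setminus E_i$ is a loop of $M_i$; thus $M_i$ is, up to loops, a matroid on $E_i$, and $\tau_i$ restricts to a bijection $E_i\to\tau_i(E_i)\subseteq\tilde E$. Relabelling $M_i$ along this bijection and declaring every element of $\tilde E\setminus\tau_i(E_i)$ to be a loop yields a matroid on $\tilde E$ whose base family is exactly $\tau_i\big((\S_i)^{\min}\big)=(\tau_i(\S_i))^{\min}$. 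Therefore, for every embedding $\tau$, the clutter $(\tilde E,(\tau_i(\S_i))^{\min})$ is a matroid base set for each $i\in N$, i.e.\ the family $(\tilde E,\tau_i(\S_i))_{i\in N}$ satisfies condition~(I) of Theorem~\ref{thm:braess}.

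Finally I would invoke (I)~$\Rightarrow$~(II) of Theorem~\ref{thm:braess} to conclude that $(\tilde E,\tau_i(\S_i))_{i\in N}$ is immune to the weak Braess paradox, hence---by the remark after Definition~\ref{def:braess}---also to the strong Braess paradox. Since $\tau$ was an arbitrary embedding, Definition~\ref{def:embeddings} yields that $(E,\S_i)_{i\in N}$ is universally immune to the strong Braess paradox, as claimed. There is essentially no obstacle in this argument; the only step requiring (minor) care is the bookkeeping of the two observations above, namely that both ``take the clutter'' and ``form the matroid base family'' commute with injective relabellings of the ground set.
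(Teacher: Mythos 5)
Your proposal is correct and matches the paper's argument: the paper dispatches this corollary in one sentence by observing that any embedding of a matroid base family is again a matroid base family, and then invoking Theorem~\ref{thm:braess} together with the fact that immunity to the weak Braess paradox implies immunity to the strong one. Your write-up simply makes explicit the two routine commutation facts (clutter formation and matroid-base structure are both preserved under injective relabelling, padding the new ground set with loops) that the paper leaves implicit.
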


Our second result now gives a complete characterization of set systems
that are {\it universally} immune to the strong Braess paradox.
\begin{framed}
\begin{theorem}\label{thm:braess-strong}
Let $|N|\geq 2$ and $(E,\S_i)_{i\in N}$ with $\S_i\subseteq 2^{E}\setminus{\{\emptyset\}}$ for each  $i\in N$.
Then, the following three statements are equivalent.
\begin{enumerate}
\item[{\rm (I)}]
$(E,(\S_i)^{\min})$ forms the base set 
of a matroid $M_i=(E,\I_i)$ for each $i\in N$.
\item[{\rm (II)}]
$(E,\S_i)_{i\in N}$ is immune to the weak Braess paradox.
\item[{\rm (III)}]
$(E,\S_i)_{i\in N}$ is universally immune to the strong Braess paradox.
\end{enumerate}
\end{theorem}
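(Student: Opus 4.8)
The plan is to prove the chain of implications $\text{(II)} \Rightarrow \text{(I)} \Rightarrow \text{(III)} \Rightarrow \text{(II)}$, since (I) $\Leftrightarrow$ (II) has already been established in Theorem~\ref{thm:braess}, and (I) $\Rightarrow$ (III) is exactly Corollary~\ref{cor:strongBP}. Thus the only genuinely new work is the implication $\text{(III)} \Rightarrow \text{(I)}$, or equivalently, its contrapositive: if for some population $j \in N$ the clutter $(E,(\S_j)^{\min})$ is \emph{not} the base set of a matroid, then $(E,\S_i)_{i\in N}$ admits the strong Braess paradox for some embedding $\tau$. I would first reduce to clutters: replacing each $\S_i$ by $(\S_i)^{\min}$ does not change which embeddings are available nor (by the uniqueness-of-equilibrium-costs argument of Lemma~\ref{lem:clutter}, applied in reverse) whether the strong paradox occurs, so it suffices to construct a bad instance when $(E,(\S_j)^{\min})$ violates the matroid exchange axiom.

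The heart of the construction exploits the failure of the matroid exchange property. By Lemma~\ref{lem:charMatBasePolytope} (Tomizawa's edge characterization), if $P_j^{\min} = \mathrm{conv}\{\chi_S \mid S \in (\S_j)^{\min}\}$ is not a matroid base polytope, then it has an edge whose direction $w \in \{-1,0,1\}^E$ has either two $+1$ entries or two $-1$ entries; arguing as in the proof of Proposition~\ref{prop:EDofPup}, we may assume (after possibly negating) that $w$ has two $+1$ entries, say at coordinates $f, g \in E$. Now I would invoke the machinery of Lemma~\ref{lem:costForPUp}, but applied to the single population $j$: choose cost functions making the midpoint $x^0$ of that edge the unique minimizer of the Beckmann potential over $P_j^{\uparrow}$, and then observe that reducing the cost on resource $f$ slightly shifts the unique minimizer to $x^0 + \beta w$ with $\beta > 0$, hence strictly increases the load (and thus the strictly-increasing cost) on resource $g$. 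The subtlety compared to the weak case is that the strong Braess paradox requires a \emph{player} whose private cost strictly increases, not just a resource. To arrange this, I would use the embedding freedom together with the second population: embed the strategy space of some second population $k \neq N \setminus\{j\}$... (here $k \in N \setminus \{j\}$, which exists since $|N| \geq 2$) so that population $k$ loads the resources of population $j$ in a controlled way, forcing the relevant resources to lie on the support of a strategy actually used by population $j$ both before and after the reduction. Concretely, I would choose the embedding so that the resources underlying $f$ and $g$ are not shared, give population $k$ a trivial rank-one structure on auxiliary resources with large enough fixed costs that $k$'s behaviour is frozen, and tune the demand $d_j$ so that the equilibrium load of population $j$ alone realizes the edge $\overline{uv}$ of $P_j^{\min}$ (scaled); then a strategy $S \ni g$ with $x^j_S > 0$ on both sides witnesses $\pi_j(x) = \sum_{e \in S} c_e(x_e) < \sum_{e \in S} \bar c_e(\bar x_e) = \pi_j(\bar x)$, because $\bar c_e(\bar x_e) \ge c_e(x_e)$ on all of $S$ (costs on $S \setminus \{f\}$ only go up by the weak-type analysis, and $f \notin S$ can be ensured by picking $S$ to be the endpoint of $\overline{uv}$ on which $w_f$ does not "arrive", i.e., the vertex $v$ with $v_f = x^0_f + \tfrac12$ versus the strategy supported near the other endpoint) and strictly up on $g$.

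The main obstacle I anticipate is precisely the last point: ensuring that the resource $g$ with strictly increased cost lies in the support of some strategy $S \in \S_j$ that carries positive flow \emph{in both} the original and the perturbed equilibrium, while $f$ does not destroy the inequality. In the weak-paradox proof one only needed \emph{a} resource with increased cost; here one needs that increase to be felt along a used path. I expect this to be handled by working directly with the two vertices $u, v$ of the edge $\overline{uv}$: these are $\chi_{S_u}, \chi_{S_v}$ for two clutter sets $S_u, S_v \in (\S_j)^{\min}$ differing exactly in the support of $w$, so $g \in S_v \setminus S_u$ (say) and $g$ lies in a strategy used after the shift; one then picks the "before" witness strategy to be $S_u$ and checks that every resource of $S_u$ other than possibly $f$ has non-decreased cost, while noting $f$ may or may not be in $S_u$ — if it is, one instead uses a symmetric argument swapping the roles of $f$ and the coordinate where $w = -1$. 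A careful bookkeeping of which coordinates of $w$ are $+1$, $-1$, or $0$, combined with the fact that along the edge the $\ell_1$-norm is constant (so the $+1$'s and $-1$'s of $w$ balance), should make this work, and I would present it by explicitly naming $S_u, S_v$ and the at-most-two special resources.
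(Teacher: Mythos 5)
Your reduction of the theorem to the implication (III) $\Rightarrow$ (I) is correct, and the first two implications are handled exactly as in the paper. However, the construction you propose for (III) $\Rightarrow$ (I) has a genuine gap: it produces only the \emph{weak} paradox, not the strong one. Write the bad edge of $P_j^{\min}$ as $\overline{uv}$ with $u=\chi_{S_u}$, $v=\chi_{S_v}$ and $w=\chi_{S_v}-\chi_{S_u}$, so $w_f=w_g=1$ forces $f,g\in S_v\setminus S_u$ --- in particular there is \emph{no} used strategy containing $g$ but not $f$, contrary to what you assert ("$f\notin S$ can be ensured by picking \dots the vertex $v$ with $v_f=x^0_f+\frac12$": that vertex is precisely the one \emph{containing} $f$). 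Worse, since $(\S_j)^{\min}$ is a clutter, $S_u\setminus S_v\neq\emptyset$, so $w$ has a $-1$ entry inside $S_u$; after the shift to $x^0+\beta w$ the loads on $S_u$ only decrease or stay fixed and the cost functions on $S_u$ are unchanged, hence $\pi_j(\bar x)=\sum_{e\in S_u}\bar c_e(\bar x_e)<\sum_{e\in S_u}c_e(x^0_e)=\pi_j(x)$: population $j$'s private cost strictly \emph{decreases}. (Your claim that "costs on $S\setminus\{f\}$ only go up by the weak-type analysis" is backwards --- coordinates with $w_e=-1$ lose load.) And your second population $k$ is explicitly placed on auxiliary resources disjoint from $f$ and $g$, so it experiences no cost change at all. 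Thus no player in your instance has increased private cost, and the strong paradox is not exhibited.

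The missing idea is that the victim of the strong paradox must be a \emph{different} population that is forced, through the embedding, to share the resource whose congestion increases. This is exactly what the paper does: Lemma~\ref{l.anti-matroid} extracts from the failure of the basis-exchange axiom two sets $X,Y$ and elements $\{a,b,c\}$ such that every feasible $Z\subseteq X\cup Y$ contains $a$ or contains $\{b,c\}$; the embedding then identifies a mandatory resource of population $2$ with $\tau_1(b)$, and costs are chosen to reproduce the two-population routing game of Fig.~\ref{fig:nonmatroid}. When the cost on $\tau_1(c)$ is reduced, population $1$ switches from the "$a$-strategies" to the "$\{b,c\}$-strategies", the load on $\tau_1(b)$ rises, and population $2$ --- which cannot avoid $\tau_1(b)$ --- is the player whose private cost strictly increases. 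Your polyhedral edge-direction machinery, which is the right tool for the weak paradox in Theorem~\ref{thm:braess}, does not by itself supply this forced sharing; to repair your argument you would need to replace the isolated population $k$ by one embedded onto the resource whose load grows, which essentially recovers the paper's combinatorial construction.
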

\end{framed}

(I) $\Leftrightarrow$ (II): See Theorem~\ref{thm:braess}.

(I) $\Rightarrow$ (III):  See Corollary~\ref{cor:strongBP}.

We prove (III) $\Rightarrow$ (I) by contradiction.
Consider a family of non-empty set systems $(E,\S_i)_{i\in N}$ with $n:=|N|\ge 2$, and assume that at least one of the induced clutters $(E, (\S_i)^{\min})_{i\in N}$, 
say $(E, (\S_1)^{\min})$, 
is \emph{not} the base set of a matroid.
As above, let $E_i:=\bigcup_{S\in \S_i} S$ denote the set of those resources that occur in at least one set in $\S_i$.
We will show that the family of set systems $(E,\S_i)_{i\in N}$ admits embeddings $\tau_i:E_i\to \tilde{E}$, $i\in N$, such that
$\tau(\S)=(\tau_1(\S_1),\ldots, \tau_n(\S_n))$ admits the strong Braess paradox.

Let us call, in general, a non-empty clutter $(E, \mathcal{F})$ a \emph{non-matroid} if 
the  set system $(E, \{X\subseteq S : S\in \mathcal{F}\})$ is not a
matroid.

Our proof relies on a certain property of non-matroids stated in the following lemma.
Its proof
can also be derived from the proof of Lemma 5.1 in \cite{HarksP14}, or the proof of Lemma 16 in \cite{Ackermann09}.
\begin{lemma}\label{l.anti-matroid}
 If clutter $(E, \mathcal{F})$  with $\mathcal{F}\neq \emptyset$ is a non-matroid, then there exist $X,Y\in \mathcal{F}$ and $\{a,b,c\}\subseteq X\Delta Y:= (X\setminus Y) \cup (Y\setminus X)$
such that for each set $Z\in \mathcal{F}$ with $Z\subseteq X\cup Y$, either
$a\in Z$ or $\{b,c\} \subseteq Z$.
\end{lemma}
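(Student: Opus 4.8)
The plan is to show that the failure of the matroid exchange axiom for the independence system generated by $\mathcal{F}$ gives us, after some combinatorial surgery, the desired triple $\{a,b,c\}$. First I would observe that since $(E,\{X\subseteq S: S\in\mathcal{F}\})$ is not a matroid, there are two independent sets $I,J$ in this system with $|I|>|J|$ such that $J\cup\{e\}$ is dependent for every $e\in I\setminus J$. Among all such pairs $(I,J)$, I would choose one minimizing $|I\cup J|$ (equivalently, maximizing $|I\cap J|$), and additionally I would take $I,J$ to be independent sets contained in members $X,Y\in\mathcal{F}$ with $X\cup Y$ as small as possible; a standard argument shows we may in fact take $I=X$ and $J=Y$ to be members of the clutter $\mathcal{F}$ themselves, or at least reduce to that situation by extending $J$ inside $Y$. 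The point of the minimality is that it forces $|I\setminus J|$ to be small: I expect the key claim to be that $|I\setminus J|=2$, say $I\setminus J=\{a,b\}$ (if $|I\setminus J|\ge 3$ one can pull out one element and contradict minimality, using that the augmentation still fails), while $J\setminus I$ contains at least one element $c$ because $|I|>|J|$.

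Next I would analyze what it means that $J\cup\{a\}$ and $J\cup\{b\}$ are both dependent, i.e., contained in no member of $\mathcal{F}$ lying inside $X\cup Y$ — more precisely, that every independent subset of $X\cup Y$ avoids spanning $\{a,b\}$ together with too much of $J$. The goal is to pin down $c\in J\setminus I$ with the property that any $Z\in\mathcal{F}$ with $Z\subseteq X\cup Y$ that omits $a$ must contain both $b$ and $c$ (and symmetrically). The reason $c$ must appear: if some $Z\subseteq X\cup Y$ contained $a$ but not $b$ and also, say, missed $c$, then $Z$ would be an independent set that either violates minimality of the pair or directly produces a valid augmentation of $J$ by $a$ or $b$ inside $X\cup Y$, contradicting the non-matroid witness. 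Working through the four cases ($Z$ contains/omits $a$, contains/omits $b$) and using the minimality of $|X\cup Y|$ and $|I\cap J|$ to rule out the bad cases is, I expect, the main obstacle: it is the heart of the argument and requires care to make sure every independent $Z\subseteq X\cup Y$ either contains $a$ or contains $\{b,c\}$.

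The main technical difficulty will therefore be the bookkeeping in that case analysis, i.e., correctly exploiting the extremal choice of $(X,Y)$ and $(I,J)$ to eliminate every configuration of $Z$ other than the two allowed ones, and making sure $a,b,c$ can all be chosen inside $X\Delta Y$ (which follows once $\{a,b\}=I\setminus J\subseteq X\setminus Y$ and $c\in J\setminus I\subseteq Y\setminus X$, after arranging $I\subseteq X$, $J\subseteq Y$ and that $a,b\notin Y$, $c\notin X$ — again a consequence of minimality, since any shared element could be removed). Since the excerpt notes the statement can be derived from Lemma~5.1 of \cite{HarksP14} or Lemma~16 of \cite{Ackermann09}, an acceptable alternative is to cite those proofs directly; but the self-contained route above via the minimal counterexample to the exchange axiom is the approach I would carry out, and I would present it in the order: (1) set up the extremal non-matroid witness; (2) derive $|I\setminus J|=2$ and locate $a,b,c$; (3) run the four-case analysis on $Z\subseteq X\cup Y$ to establish the dichotomy $a\in Z$ or $\{b,c\}\subseteq Z$.
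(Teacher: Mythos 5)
There is a genuine gap here, and one of your intermediate claims is false as stated. First, the reduction ``we may take $I=X$ and $J=Y$ to be members of the clutter themselves'' cannot work in general: $\mathcal{F}$ is an antichain, so if all its members have the same cardinality (a typical non-matroid situation, e.g.\ $\mathcal{F}=\{\{1,2\},\{3,4\}\}$ on $E=\{1,2,3,4\}$), the augmentation axiom can only fail for a pair in which $J$ is a \emph{proper} subset of a member, and ``extending $J$ inside $Y$'' destroys the property that $J\cup\{e\}$ is dependent for every $e\in I\setminus J$. Second, and more seriously, the placement you aim for --- $\{a,b\}=I\setminus J\subseteq X\setminus Y$ and $c\in J\setminus I\subseteq Y\setminus X$, with the dichotomy ``$a\in Z$ or $\{b,c\}\subseteq Z$'' --- is refuted by that same example: with $X=\{1,2\}$, $Y=\{3,4\}$, $I=X$, $J=\{3\}$, the set $Z=\{3,4\}\in\mathcal{F}$ contains neither $a$ nor $b$ for any choice of $a,b\in I\setminus J=\{1,2\}$. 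In every valid triple for this instance the pair $\{b,c\}$ must lie entirely on one side of $X\Delta Y$ (inside the ``other'' basis) and the singleton $a$ on the opposite side, whereas your configuration puts $a$ and $b$ on the same side. Finally, the four-case analysis that you yourself flag as the main obstacle is exactly the content of the lemma and is not carried out, so even a corrected dichotomy is not established.

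For comparison, the paper argues directly with the basis-exchange characterization on members of the clutter rather than with the augmentation axiom on the generated independence system. It picks $X,Y\in\mathcal{F}$ and $e\in X\setminus Y$ witnessing the failure of exchange (i.e.\ $X-e+f\notin\mathcal{F}$ for all $f\in Y\setminus X$), chosen so that $|Y\setminus X|$ is minimal among all such witnesses. If $|Y\setminus X|=1$, it sets $a$ to be that single element and takes two elements $b,c\in X\setminus Y$; any $Z\in\mathcal{F}$ with $Z\subseteq (X\cup Y)\setminus\{a\}$ is then contained in $X$, hence equals $X$ by the antichain property, so $\{b,c\}\subseteq Z$. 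If $|Y\setminus X|\ge 2$, it sets $a=e$ and picks $b,c\in Y\setminus X$; a set $Z\in\mathcal{F}$ with $Z\subseteq(X\cup Y)\setminus\{a\}$ missing one of $b,c$ would yield a witness $(X,Z,e)$ with $|Z\setminus X|<|Y\setminus X|$, contradicting minimality. Reworking your plan along these lines --- exchange axiom on clutter members plus minimality of $|Y\setminus X|$ --- is the natural repair.
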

\begin{proof} 
Recall the \emph{basis exchange property} for matroids: a clutter $(E,\mathcal{F})$ is the family of bases of some matroid
if and only if for any $X, Y\in \mathcal{F}$ and $e\in X\setminus{Y}$ there exists some $f\in Y\setminus{X}$ such that $X-e+f\in \mathcal{F}$.
Thus, if the clutter $(E,\mathcal{F})$ is a non-matroid, there must exist $X,Y\in \mathcal{F}$ and $e\in X\setminus{Y}$
such that \emph{for all} $f\in Y\setminus{X}$ the set $X-e+f$ does \emph{not} belong to $\mathcal{F}$.
We choose such $X, Y$ and $e\in X\setminus{Y}$ with $|Y\setminus{X}|$ minimal 
(among all $Y'\in \mathcal{F}$ with $X-e+f'\not\in \mathcal{F}$ for all $y'\in Y'\setminus{X}$).
Note that $|Y\setminus{X}|\ge 1$, since $\mathcal{F}$ is a clutter.
We distinguish the two cases  $|Y\setminus{X}|=1$ and  $|Y\setminus{X}|> 1$:
In case $|Y\setminus{X}|=1$, set $\{a\}=Y\setminus{X}$ and choose any two distinct elements $\{b,c\}\in X\setminus{Y}$.
Note that  $|X\setminus{Y}|\ge 2$ as otherwise, if $X\setminus{Y}=\{e\}$, then $Y=X-e+a$, in contradiction to our assumption.
Now, for any set $Z\subseteq (X\cup Y)-a$ with $Z\in \mathcal{F}$, the clutter property implies $Z=X$, and therefore $\{b,c\}\subseteq Z$, as desired.

In the latter case $|Y\setminus{X}|> 1$, we choose any two distinct elements $\{b,c\}\in Y\setminus{X}$ and set $a=e$.
Consider any $Z\in \mathcal{F}$ with $Z\subseteq (X\cup Y)-a$ and suppose, for the sake of contradiction, that $\{b,c\}\not\subseteq Z$.
Since $Z\setminus{X}\subseteq Y\setminus{X}$, there cannot exist some $g\in Z\setminus{X}$ with $X-a+g\in \mathcal{F}$.
However, $|Z\setminus{X}| < |Y\setminus{X}|$ in contradiction to our choice of $Y$.
\end{proof}

Using this property of non-matroids, we now define  embeddings
$\tau(\S)=(\tau_1(\S_1),\ldots, \tau_n(\S_n))$ that admit the strong Braess paradox.
The rough idea can be described as follows: we choose the embeddings, demands, and cost-functions in such a way that
the first two populations are independent of the remaining populations, and such that the game of the first two populations is isomorphic to the
routing game illustrated in Figure \ref{fig:nonmatroid} which admits the strong Braess paradox.

\begin{figure}[h!]
\begin{center}
\begin{tikzpicture}
 
 \begin{scope}[yshift=4.2cm]
 
  \node (fr) at (0, -0.4) {};
   \draw (fr) +(0, -0.2) node[labeledNodeS] (s1) {$s_1$};
 \draw (s1) +(1.5, 0.75) node[labeledNodeS] (v1) {$s_2$}
 edge[normalEdgeF, <-] node[above] {$3$} (s1) ;
 \draw (s1) +(1.5, -0.75) node[labeledNodeS] (v2) {$t$}
 edge[normalEdgeF, <-] node[above] {$1$} (s1) 
 edge[normalEdgeF, <-] node[left] {$x$} (v1);
 ;
 \end{scope}
\end{tikzpicture}
\begin{tikzpicture}
 
 \begin{scope}[yshift=4.2cm]
 
  \node (fr) at (0, -0.4) {}; 
 \draw (fr) +(2, -0.2) node[labeledNodeS] (s1) {$s_1$};
 \draw (s1) +(1.5, 0.75) node[labeledNodeS] (v1) {$s_2$}
 edge[normalEdgeF, <-] node[above] {$0$} (s1) ;
 \draw (s1) +(1.5, -0.75) node[labeledNodeS] (v2) {$t$}
 edge[normalEdgeF, <-] node[above] {$1$} (s1) 
 edge[normalEdgeF, <-] node[left] {$x$} (v1);
 ;
 \end{scope}
\end{tikzpicture}
\end{center}
\caption{There are two populations $1$ and $2$ that want to send $1/2$ units of demand each  from $s_1$ and $s_2$, respectively,  to $t$.
In the left network there is a unique Wardrop equilibrium, where
each population uses their direct edge leading to a cost of $1$
for every agent of population $1$ and a cost of $1/2$
for the agents of population $2$. 
Decreasing the cost from $3$ to $0$
for the arc $(s_1,s_2)$ induces now the unique Wardrop equilibrium,
where agents of population $1$ now choose the path $(s_1,s_2,t)$.
As a consequence, the private cost of the players in population 2 increase from $\frac{1}{2}$ to $1$. Also the social cost increases from $\frac{3}{4}$ to $1$.
}
\label{fig:nonmatroid}
\end{figure}
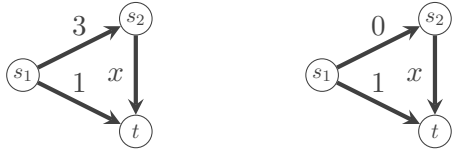

Let us set
the demands of all populations $d_i$ with $i\in N\setminus{\{1,2\}}$  to zero.
This way, the game is basically determined by 
the players in populations $1$ and $2$.
We set the demands $d_1=d_2=\frac{1}{2}$.

Let us choose two sets $X, Y$ in 
$(\S_1)^{\min}$ and
$\{a, b, c\}\subseteq X\cup Y$
as described in  Lemma~\ref{l.anti-matroid}.
Let $e:=\tau_1(a), f:=\tau_1(b)$ and $g:=\tau_1(c)$ with load-dependent costs $c_f(t)=t$, $c_g(t)=3$ and $c_e(t)=1$ for any $t\in \R_{\geq 0}$. We set the costs of all resources in $\tau_1(E_1)\setminus{(\tau_1(X)\cup \tau_1(Y))}$ to some very large cost $M$ (large enough so that no player of population $1$ would ever use any of these resources).
The cost on all resources in $(\tau_1(X)\cup \tau_1(Y))\setminus{\{e,f,g\}}$ is set to zero.
This way, each player of population $1$ always chooses a strategy $\tau_1(Z)\subseteq \tau_1(X)\cup \tau_1(Y)$ which, by Lemma~\ref{l.anti-matroid}, either contains
$e$, or it contains both $f$ and $g$.

In order to guarantee that each player of population $2$ always selects a strategy containing $f$, we select a set $S\in 
(\S_2)^{\min}$ of minimal cardinality,
and some arbitrary resource $k\in S$, 
and define the embedding $\tau_2$ such that $\tau_2(k)=f$ and $\tau_1(E_1)\cap \tau_2(E_2)=\{f\}$.
We set the resource costs such that
$c_r(x)=2$ for all resources $r\in \tau_2(E_2)\setminus{\{f\}}$.

Note that the game of population $1$ and $2$ can be represented by the routing game illustrated by the left network in Figure \ref{fig:nonmatroid}
if we interpret resource $e$ as arc $(s_1,t)$, resource $f$ as arc $(s_2,t)$, and resource $g$ as arc $(s_1,s_2)$.
By the choice of our cost functions, each player of population $2$ always selects the ``direct connection'', i.e., a strategy containing $f$, but neither $e$ nor $g$.
As long as the cost on $g:=(s_1,s_2)$ is $3$ (like in the left network),  each player of population $1$ selects the ``direct connection'', i.e.,  a strategy containing $e$, but neither $f$ nor $g$.
However, if the cost on $g$ is reduced from $3$ down to zero (like in the right network in Figure \ref{fig:nonmatroid}),
each player selects the seemingly cheaper strategy containing both, $f$ and $g$, but not $e$ (the ``indirect connection''), resulting in a Wardrop equilibrium in which each
player of population $2$ pays twice as much as in the Wardrop equilibrium for the left network, i.e., before the costs have been reduced. 
Not only the private cost of the players of population $2$, but also the total cost of the new Wardrop equilibrium increased after the cost on resource $g$ has been decreased.  Note that the entire construction
only involved cost reductions.

\begin{remark}
We can also characterize the universally strong Braess paradox
for the case where only demand reductions are considered. 
For this, we can use Lemma~{\rm \ref{l.anti-matroid}}
to construct a game that is isomorphic to the instance presented 
in Fig.~{\rm \ref{fig:braess-demand}}.
Note that in this instance we need two ``trivial'' players having one resource each,
thus, we need at least three populations. 
\end{remark}
\begin{remark}
Let us finally remark that the proof of  (III) $\Leftrightarrow$ (I) of Theorem~\ref{thm:braess-strong}
also works for characterizing set systems being universally immune
to a ``global'' version of the Braess paradox, where instead of population specific
costs, the total (social) cost is considered. Indeed, the counterexample derived
in the proof of (III) $\Rightarrow$ (I)  shows that for any non-matroid set system,
there exists an embedding admitting this global form of Braess paradox.
\end{remark}

\section{Conclusions}
Our results give a characterization of the weak Braess paradox for arbitrary set systems:
For any set system that is immune to the weak Braess paradox, the corresponding clutters must correspond to bases of some matroid. For the strong Braess paradox, we only 
derived a weaker characterization requiring the flexibility of arbitrary embeddings
of set systems into the ground set of resources. Characterizing the strong Braess paradox without
the use of embeddings remains an open question.

Our characterization can be transferred to (discrete) congestion games
of Rosenthal~\cite{Rosenthal73a} as follows: If we compare two special
pure Nash equilibria (namely the global potential
minima) of two congestion models (for which only cost functions or demands are decreased),
then our characterizations still hold. For atomic splittable congestion games (cf.~Bhaskar et al.~\cite{Bhaskar15}, Correa et al.~\cite{CCS06}, Harks~\cite{Harks:stack2011} and Schoppmann and Roughgarden~\cite{Roughgarden15}) it seems unclear whether or not  similar results hold true.

\subsection*{Acknowledgments.}
We thank Christian Fersch and Andreas S. Schulz for helpful comments
of an earlier draft of the manuscript.
We thank two anonymous reviewers for their very valuable feedback.
In particular, we are grateful to one reviewer who pointed out the special role of clutters
in our characterization.
The work of the first author is supported 
by JSPS Grant-in-Aid for Scientific Research (B) 25280004.

\bibliographystyle{abbrv}
\bibliography{master-bib}

\end{document}